\newcommand{\set}[1]{\{#1\}}
\newcommand{\poseven}{\mathrm{Pos\-Even}}
\newcommand{\posodd}{\mathrm{Pos\-Odd}}
\newcommand{\limsupeven}{\mathrm{Limsup\-Even}}
\newcommand{\limsupodd}{\mathrm{Limsup\-Odd}}
\newcommand{\Aa}{\mathcal{A}}
\newcommand{\Ll}{\mathcal{L}}
\newcommand{\Rr}{\mathcal{R}}
\newcommand{\Ss}{\mathcal{S}}
\newcommand{\floor}[1]{\lfloor#1\rfloor}
\newcommand{\ceil}[1]{\lceil#1\rceil}
\newcommand{\seq}[1]{\langle #1 \rangle}
\newcommand{\rn}{\mathit{rn}}
\newcommand{\bad}{\mathit{bad}}
\newcommand{\fst}{\mathit{fst}}
\newcommand{\rej}{\mathsf{rej}}
\title{Parity Games: Another View on Lehtinen's Algorithm}
\author{Paweł Parys}{Institute of Informatics, University of Warsaw, Poland}{parys@mimuw.edu.pl}{https://orcid.org/0000-0001-7247-1408}{}
\authorrunning{P. Parys}
\keywords{Parity games, quasi-polynomial time, separating automata, good-for-games automata}
\begin{document}

\maketitle

\begin{abstract}
	Recently, five quasi-polynomial-time algorithms solving parity games were proposed.
	We elaborate on one of the algorithms, by Lehtinen (2018).
	
	Czerwiński et al.\ (2019) observe that four of the algorithms can be expressed as constructions of separating automata (of quasi-polynomial size),
	that is, automata that accept all plays decisively won by one of the players, and rejecting all plays decisively won by the other player.
	The separating automata corresponding to three of the algorithms are deterministic, and it is clear that deterministic separating automata can be used to solve parity games.
	The separating automaton corresponding to the algorithm of Lehtinen is nondeterministic, though.
	While this particular automaton  can be used to solve parity games, this is not true for every nondeterministic separating automaton.
	As a first (more conceptual) contribution, we specify when a nondeterministic separating automaton can be used to solve parity games.
	
	We also repeat the correctness proof of the Lehtinen's algorithm, using separating automata.
	In this part, we prove that her construction actually leads to a faster algorithm than originally claimed in her paper: 
	its complexity is $n^{O(\log n)}$ rather than $n^{O(\log d\cdot\log n)}$ (where $n$ is the number of nodes, and $d$ the number of priorities of a considered parity game),
	which is similar to complexities of the other quasi-polynomial-time algorithms.
\end{abstract}

\section{Introduction}

	Parity games have played a fundamental role in automata theory, logic, and their applications to verification and synthesis since early 1990's.
	The algorithmic problem of finding the winner in parity games can be seen as the algorithmic backend to problems in automated verification and controller synthesis.
	It is polynomial-time equivalent to the emptiness problem for nondeterministic automata on infinite trees with parity acceptance conditions, 
	and to the model-checking problem for modal $\mu$-calculus~\cite{EJS01}.
	Also, decision problems like validity or satisfiability for modal logics can be reduced to parity game solving.
	Moreover, it lies at the heart of algorithmic solutions to the Church's synthesis problem~\cite{RabinBook}. 
	The impact of parity games reaches relatively far areas of computer science, like Markov decision processes~\cite{FearnleyMDP} and linear programming~\cite{FHZ-simplex}.

	The problem of solving parity games has interesting complexity-theoretic status.
	It is a long-standing open question whether parity games can be solved in polynomial-time.
	Several results show that they belong to some classes ``slightly above'' polynomial time.
	Namely, deciding the winner of parity games was shown to be in $\mathsf{NP}\cap\mathsf{coNP}$~\cite{EJS01}, and in $\mathsf{UP}\cap\mathsf{coUP}$~\cite{up-co-up},
	while computing winning strategies is in \textsf{PLS}, \textsf{PPAD}, and even in their subclass \textsf{CLS}~\cite{Daskalakis-Papadimitriou}.
	The same holds for other kinds of games: mean-payoff games~\cite{mean-payoff}, discounted games, and simple stochastic games~\cite{stochastic};
	parity games, however, are the easiest among them, in the sense that there are polynomial-time reductions from parity games to the other kinds of games~\cite{up-co-up,mean-payoff},
	but no reductions in the opposite direction are known.
	
	For almost three decades researchers were trying to cutback the complexity of solving parity games, which resulted in a series of algorithms, 
	all of which were either exponential~\cite{Zielonka,BCJLM97,Seidl96,old-progress-measure,strategy-improvement,Schewe-big-steps,priority-promotion},
	or mildly subexponential~\cite{randomized-subexponential,subexponential}.
	The next era came unexpectedly in 2017 with a breakthrough result of Calude, Jain, Khoussainov, Li, and Stephan~\cite{calude} (see also~\cite{parity-short,parity-excursion}), 
	who designed an algorithm working in quasi-polynomial time (QPT for short).
	This invoked a series of QPT algorithms, which appeared soon after~\cite{small-progress-measure,Fearnley,Lehtinen,ZielonkaParys}.

	Four of the QPT algorithms~\cite{calude,small-progress-measure,Fearnley,Lehtinen}, at first glance being quite different,
	actually proceed along a similar line---as observed by Bojańczyk and Czerwiński~\cite[Section 3]{separation-toolbox} and Czerwiński et al.~\cite{lower-bound}.
	Namely, out of all the four algorithms one can extract a construction of a \emph{PG separator}, that is, a safety automaton 
	(nondeterministic in the case of Lehtinen~\cite{Lehtinen}, and deterministic in the other algorithms), 
	which accepts all words encoding plays that are decisively won by one of the players 
	(more precisely: plays consistent with some positional winning strategy), 
	and rejects all words encoding plays in which the player loses (for plays that are won by the player, but not decisively, the automaton can behave arbitrarily).
	The PG separator does not depend at all on the game graph; it depends only on its size.
	Having a PG separator, it is not difficult to convert the original parity game into an equivalent safety game
	(by taking a ``product'' of the parity game and the PG separator), which can be solved easily---and all the four algorithms actually proceed this way,
	even if it is not stated explicitly that a PG separator is constructed.
	As shown in Czerwiński et al.~\cite{lower-bound} (see also Colcombet and Fijalkow~\cite{universal-graphs} for another view on this proof), 
	all PG separators have to look very similar: their states have to be leaves of some so-called universal tree;
	particular papers propose different constructions of these trees, and of the resulting PG separators (of quasi-polynomial size).
	Moreover, Czerwiński et al.~\cite{lower-bound} show a quasi-polynomial lower bound for the size of a PG separator.
	Let us also mention that, beside of the four algorithms, there is a fifth QPT algorithm~\cite{ZielonkaParys} obtained by speeding up the Zielonka's recursive algorithm~\cite{Zielonka};
	this algorithm does not fit into the separator approach of Czerwiński et al.~\cite{lower-bound}.

	Of course the idea of converting a parity game into an equivalent safety game is itself much older than QPT algorithms for parity games
	(see e.g.~Bernet, Janin, and Walukiewicz~\cite{parity-to-safety-Walukiewicz}), 
	and was applied not only to finite games, but also to pushdown and collapsible pushdown games~\cite{parity-to-safety-PDA,parity-to-safety-CPDA}.
	
	In this paper we deliberate on the Lehtinen's algorithm~\cite{Lehtinen}.
	As already said, PG separators corresponding to the other algorithms~\cite{calude,small-progress-measure,Fearnley} are deterministic;
	in such a situation it is straightforward that the product game (obtained from an original parity game and the PG separator) is equivalent to the original game 
	(see, e.g.,~\cite[Proposition 3.2]{lower-bound}).
	The PG separator corresponding to the Lehtinen's algorithm~\cite{Lehtinen} is nondeterministic, though, 
	and in general while taking a product of a game with a nondeterministic automaton we do not obtain an equivalent game.
	Actually, a notion of \emph{good-for-games} (GFG) automata was introduced~\cite{good-for-games}; 
	this is a subclass of nondeterministic automata for which it is guaranteed that the product game remains equivalent.
	But one can see that the Lehtinen's separator is not GFG; in consequence, the fact that the Lehtinen's algorithm actually works is quite intriguing.
	As a first contribution we explain this phenomenon.
	Namely, we define a notion of \emph{suitable-for-parity-games} (SFPG) separators, which is more comprehensive that the GFG notion
	(but, unlike GFG, applies only to parity games, not to arbitrary games), and which covers the Lehtinen's separator.
	We then prove that the winner does not change while taking a product of a parity game with an arbitrary SFPG separator,
	which means that every SFPG separator can be used to solve parity games.
	In this way, we establish a framework for solving parity games via nondeterministic PG separators.
	
	As a second contribution, we improve the complexity of the Lehtinen's algorithm.
	Let us recall that the algorithm converts the original parity game with $n$ nodes and $d$ priorities 
	into a parity game with $n^{O(\log d)}$ nodes and $O(\log n)$ priorities (which is actually a product of the original game and of an appropriate SFPG separator).
	Once the new game is created, it has to be solved, say by the small progress measures algorithm~\cite{old-progress-measure}, which is exponential in the number of priorities:
	the resulting complexity is $n^{O(\log d\cdot\log n)}$.\footnote{%
		A better complexity can be obtained by using one of the other QPT algorithms to solve the resulting game.}
	We observe here that the resulting parity game is of a special form---it is possible to win the game without seeing $n$ opponent's priorities in a row---%
	and in consequence it can be solved faster: in time $n^{O(\log n)}$.
	This locates the complexity of the Lehtinen's algorithm much closer to the complexity of the other QPT algorithms~\cite{calude,small-progress-measure,Fearnley,ZielonkaParys},
	which is $n^{O(\log d)}$ (being the same for $d$ close to $n$, but better for games with a small number of priorities).
	
	Our paper is structured as follows.
	In Section~\ref{sec:prelim} we give all necessary definitions.
	In Section~\ref{sec:product} we define SFPG separators, and we prove that they can be used to solve parity games.
	In Section~\ref{sec:lehtinen} we recall the Lehtinen's separator, and we prove that the product game is of a special form.
	In Section~\ref{sec:safety} we prove that this product game can be solved quickly.

\section{Preliminaries}\label{sec:prelim}

\subparagraph{Parity Games.}

	Parity games are played on \emph{game graphs} of the form $G=(V,V_\square,V_\triangle,v_I,E)$, where $V$ is a set of nodes,
	$(V_\square,V_\triangle)$ is a partition of $V$ (which satisfies $V_\square\cup V_\triangle=V$ and $V_\square\cap V_\triangle=\emptyset$),
	$v_I\in V$ is a \emph{starting node}, 
	and $E\subseteq V\times\set{1,2,\dots,d}\times V$ is a set of directed edges labeled by numbers called \emph{priorities}.
	Typically, we assume that $V = \set{1, 2, \dots, n}$ for some natural number $n$. 
	We use $d$ to denote an upper bound for priorities of edges.
	Without loss of generality, we assume that every node has at least one outgoing edge. 
	
	The game is played by two players who are called Even and Odd.   
	A play starts at the starting node $v_I$ and then the players move by following outgoing edges forever, thus forming an infinite path. 
	Every node of the graph is owned by one of the two players: nodes from $V_\square$ and $V_\triangle$ belong to Even and Odd, respectively.
	It is always the owner of the node who moves by following an outgoing edge from the current node to a next one.  
	
	The outcome of the two players interacting in a parity game by making moves is an infinite path in the game graph.  
	We identify such infinite paths with sequences of edges constituting these paths;
	thus an infinite path is an infinite word over the alphabet $\Sigma_{n, d} = \set{1, 2, \dots, n}\times\set{1, 2, \dots, d}\times\set{1, 2, \dots, n}\supseteq E$.
	The set of all infinite words over $\Sigma_{n, d}$ is denoted $\Sigma_{n, d}^\omega$.
	
	We write $\limsupeven_{n, d}$ for the set of infinite words $w\in\Sigma_{n, d}^\omega$ 
	in which the largest number that occurs infinitely many times in the priority component of the letters is even, 
	and we write $\limsupodd_{n, d}$ for the set of infinite words $w\in\Sigma_{n, d}^\omega$ in which that number is odd. 
	Observe that the sets $\limsupeven_{n, d}$ and $\limsupodd_{n, d}$ form a partition of the set $\Sigma_{n, d}^\omega$ of all infinite words over the alphabet~$\Sigma_{n, d}$. 
	An infinite path in a game graph with~$n$ nodes and edge priorities not exceeding~$d$ is \emph{won} by Even if and only if the play is in~$\limsupeven_{n, d}$. 
	
	A \emph{positional strategy} for Even is a set of edges that go out of nodes she owns---exactly one such edge for each of her nodes.
	Even uses such a strategy by always---if the current node is owned by her---following the unique outgoing edge that is in the strategy.
	Note that when Even uses a positional strategy, her moves depend only on the current node---they are oblivious to what choices were made by the players so far. 
	If Even wins the game by following such a strategy, no matter what edges her opponent Odd follows whenever it is her turn to move, then such a strategy is called \emph{winning}.
	Analogously we define a positional (winning) strategy for Odd.
	A basic result for parity games that has notable implications is their \emph{positional determinacy}~\cite{EJ91,Mos91}:
	exactly one of the players has a positional winning strategy. 
	
	The \emph{strategy subgraph} of a game graph $G$ with respect to a positional strategy for Even is the subgraph of $G$ that includes all outgoing edges from nodes owned by Odd 
	and exactly those outgoing edges from nodes owned by Even that are in the positional strategy.
	Observe that the set of plays that arise from Even playing her positional strategy is exactly the set of all plays in the strategy subgraph. 
	
	Let $\poseven_{n,d}$ and $\posodd_{n,d}$ be the sets of all plays that arise from positional winning strategies for Even and Odd, respectively, 
	in some game graph with~$n$ nodes and priorities up to~$d$.    
	Clearly $\poseven_{n, d} \subseteq\limsupeven_{n,d}$ and $\posodd_{n, d} \subseteq\limsupodd_{n,d}$.
	The difference between $\poseven_{n, d}$ and $\limsupeven_{n,d}$ is not only in words that are not valid paths (where the target of some edge does not match the source of the next edge);
	in $\limsupeven_{n,d}\setminus\poseven_{n, d}$ we have for example the path $((1,2,1)(1,1,2)(2,2,2)(2,1,1))^\omega$ (if this path follows a positional strategy for Even in some game graph,
	then $((1,1,2)(2,1,1))^\omega$ follows such a strategy as well, but the latter path is won by Odd).

\subparagraph{Parity and Safety Automata.}

	We consider here only automata reading plays of parity games, so we assume that the input alphabet is $\Sigma_{n,d}$ for some $n$ and $d$.
	We use $d'$ to denote an upper bound for priorities emitted by parity automata.
	A non-deterministic \emph{parity automaton} is a tuple $\Aa=(Q,s_I,\Delta)$,
	where $Q$ is a finite set of \emph{states}, $s_I\in Q$ is an \emph{initial} state,
	and $\Delta\subseteq Q\times\Sigma_{n,d}\times\set{1,2,\dots,d'}\times Q$ is a \emph{transition relation}.
	Without loss of generality, we assume that the transition relation is \emph{total}, that is, 
	for every state~$s$ and letter~$e$, there is some priority $p$ and some state~$s'$, such that the tuple $(s, e, p, s')$ is in the transition relation.
	
	Such a parity automaton can be seen as a directed graph, where $(s, e, p, s')\in\Delta$ is an edge labeled by a letter $e$ and by a priority $p$.
	An infinite path in this graph, starting in the initial state, is called a \emph{run} of $\Aa$.
	The word \emph{read by} such a run (being a word over $\Sigma_{n,d}$) is obtained by projecting every edge of the run to its second component.
	A run is \emph{accepting} if the largest priority that labels infinitely many edges of the run is even.
	If an accepting run reading a word $w$ exists, we say that $w$ is \emph{accepted}, and we write $\Ll(\Aa)$ for the set of all words accepted by $\Aa$.
	
	A parity automaton is called a \emph{safety automaton} if $d'=2$, and there is a set of \emph{rejecting states} such that
	\begin{itemize}
	\item	if $(s,e,1,s')\in\Delta$, then $s'$ is rejecting, and
	\item	if $s$ is rejecting then all transitions $(s,e,p,s')\in\Delta$ are such that $p=1$ and $s'$ is rejecting.
	\end{itemize}
	We notice that a run of a safety automaton is accepting if it does not visit rejecting states.

\section{Product Games and SFPG Separators}\label{sec:product}

	We first recall the notion of product games and separators considered in Czerwiński et al.~\cite{lower-bound}.
	
	\begin{definition}
		Given a game graph $G=(V,V_\square,V_\triangle,v_I,E)$ with at most $n$ nodes and priorities up to~$d$, 
		and a parity automaton~$\Aa=(Q,s_I,\Delta)$ with input alphabet $\Sigma_{n,d}$,
		we define a game graph $G\times\Aa$, called a \emph{synchronized product} of $G$ and $\Aa$, in which
		\begin{itemize}
		\item
			the set of nodes is $(V\cup E)\times Q$, and the starting node is $(v_I,s_I)$;
		\item
			ownership of nodes in $V\times Q$ is inherited from the parity game~$G$, and all nodes in $E\times Q$ belong to Even;
		\item
			for every edge $e=(u,p,v)\in E$ and every state $s\in Q$, there is an edge $((u,s),1,(e,s))$;
		\item
			for every edge $e=(u,p,v)\in E$ and every transition $(s,e,p',s')\in\Delta$, there is an edge $((e,s),p',(v,s'))$;
		\item
			there are no other edges except those specified above.
		\end{itemize}
	\end{definition}
	
	In other words, the players of $G\times\Aa$ play in the parity game~$G$, 
	and the automaton~$\Aa$ is fed the edges corresponding to moves made by the players.
	After every move in $G$, Even resolves non-deterministic choices in $\Aa$.
	In order to win in $G\times\Aa$, Even has to ensure that the run of $\Aa$ reading the play from $G$ is accepting.
	
	It is easy to see that if $\Aa$ is deterministic, and $\Ll(\Aa)$ equals $\limsupeven_{n,d}$ (i.e., the winning condition in $G$), then the games $G$ and $G\times\Aa$ have the same winner.
	The crux of the QPT algorithms is that instead of an automaton recognizing $\limsupeven_{n,d}$, we can use a \emph{PG separator}.
	
	\begin{definition}\label{def:separator}
		Let $\Aa$ be a parity automaton with input alphabet $\Sigma_{n,d}$.
		We say that $\Aa$ is a \emph{parity games separator} (\emph{PG separator}) if it accepts all words from $\poseven_{n,d}$, and rejects all words from $\posodd_{n,d}$.
		If it additionally rejects all words from $\limsupodd_{n,d}$, it is a \emph{strong PG separator}.
	\end{definition}
	
	While for solving parity games (i.e., for the equivalence between $G$ and $G\times\Aa$ described below) it is enough to have a PG separator, the separators corresponding to the
	QPT algorithms~\cite{calude,small-progress-measure,Fearnley,Lehtinen} are actually strong PG separators (cf.~\cite[Section 4]{lower-bound}).
	
	If $\Aa$ is a PG separator, and Odd can win in $G$, then she can also win in $G\times\Aa$: she can ensure that the play from $G$ belongs to $\posodd_{n,d}$,
	and such a play is rejected by $\Aa$.
	The same holds for Even, assuming that $\Aa$ is deterministic.
	If $\Aa$ is nondeterministic, however, it is possible that Even wins in $G$ but Odd wins in $G\times\Aa$.
	Indeed, if Even wins in $G$, she can only ensure that the resulting play is accepted by $\Aa$.
	But in $G\times\Aa$ her task is more difficult: she has to resolve nondeterministic choices of $\Aa$ as they arise, without knowing the whole play from $G$.
	The abilities of Even are described by transition strategies.
	
	\begin{definition}
		A \emph{transition strategy} for an automaton $\Aa$ is a function $\sigma\colon\Sigma_{n,d}^*\times Q\times\Sigma_{n,d}\to\Delta$
		such that $\sigma(w,s,e)$ is of the form $(s,e,p,s')$ for all $(w,s,e)\in\Sigma_{n,d}^*\times Q\times\Sigma_{n,d}$.
		We use such a strategy to resolve non-deterministic choices: if the word read so far is $w$, the state of $\Aa$ is $s$, and the next letter to be read is $e$, 
		then we proceed using the transition $f(w,s,e)$.
		We say that a transition strategy $\sigma$ is \emph{winning} for a set of words $L\subseteq \Ll(\Aa)$ if
		for every word $w\in L$, the run obtained by following $\sigma$ while reading the word $w$ is accepting.
	\end{definition}
	
	Henzinger and Piterman~\cite{good-for-games} proposed a notion of good-for-games automata:
	an automaton $\Aa$ is \emph{good for games} (\emph{GFG}) if in $\Aa$ there exists a transition strategy that is winning for $\Ll(\Aa)$.
	If $\Aa$ is GFG, then Even can use a winning strategy from $G$ and a transition strategy winning for $\Ll(\Aa)$ to win in $G\times\Aa$.
	We observe, though, that it is not a problem for Even to have a transition strategy that depends on $G$, and on her winning strategy in $G$.
	This way we come to a more comprehensive definition of SFPG separators.

	\begin{definition}\label{def:suitable}
		A PG separator $\Aa$ with input alphabet $\Sigma_{n,d}$ is \emph{suitable for parity games} (\emph{SFPG}) if
		for every game graph $G$ with $n$ nodes and priorities up to $d$, and for every positional winning strategy $\tau$ for Even in $G$,
		the automaton $\Aa$ has a transition strategy $\sigma$ winning for the set of all plays in $G$ that arise from $\tau$.
	\end{definition}
	
	Notice that every deterministic automaton $\Aa$ is good for games: the transition strategy that in every situation chooses the only available transition allows to accept all words from $\Ll(\Aa)$.
	Moreover, every good-for-games PG separator $\Aa$ is SFPG: for every $G$ and $\tau$ as in Definition~\ref{def:suitable}, 
	all plays in $G$ that arise from $\tau$ are accepted by $\Aa$ (because $\Aa$ is a PG separator), 
	and thus the transition strategy that is winning for the whole $\Ll(\Aa)$ (existing because $\Aa$ is GFG) can be used for the set of these plays.
	In the next section we present the PG separator corresponding to Lehtinen's algorithm; it is neither deterministic nor good for games, but it is SFPG.
	
	We now prove that by producting a parity game with an SFPG separator, we obtain an equivalent game.
	
	\begin{theorem}\label{thm:safety-game-equiv}
		If $G$ is a game graph with $n$ nodes and priorities up to $d$,
		and $\Aa$ is an SFPG separator with input alphabet $\Sigma_{n,d}$,
		then Even has a winning strategy in $G$ if and only if she has a winning strategy in the synchronized product $G\times\Aa$.
	\end{theorem}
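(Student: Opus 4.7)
My plan is to prove the biconditional by establishing its two halves separately and then invoking positional determinacy of parity games. Observe that $G\times\Aa$ is itself a parity game (its priority set is $\set{1,\dots,d'}$, inherited from $\Aa$), so it too is positionally determined. Hence it suffices to show that if Even has a positional winning strategy in $G$ then Even has a winning strategy in $G\times\Aa$, and that if Odd has a positional winning strategy in $G$ then Odd has a winning strategy in $G\times\Aa$.

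For the forward direction, I would fix a positional winning strategy $\tau$ for Even in $G$ and let $L_\tau\subseteq\poseven_{n,d}$ denote the set of all plays in $G$ arising from $\tau$. By Definition~\ref{def:suitable}, $\Aa$ admits a transition strategy $\sigma$ that is winning for $L_\tau$. Then I would describe a strategy for Even in $G\times\Aa$ as follows: at a node $(v,s)\in V_\square\times Q$ she follows the $G$-edge prescribed by $\tau(v)$ (moving to the corresponding node of $E\times Q$), and at a node $(e,s)\in E\times Q$ she selects the transition $\sigma(w,s,e)$, where $w\in\Sigma_{n,d}^*$ is the sequence of $G$-edges traversed so far (recoverable from the history in $G\times\Aa$). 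Any play consistent with this combined strategy projects on first components to a play of $G$ arising from $\tau$, i.e., to a word in $L_\tau$, and the $\Aa$-transitions collected along the play form precisely the run of $\Aa$ produced by $\sigma$ on that word, which is accepting; hence Even wins in $G\times\Aa$.

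For the backward direction, I would fix a positional winning strategy $\tau'$ for Odd in $G$. Since all nodes of $E\times Q$ belong to Even, Odd's only choices in $G\times\Aa$ occur at nodes $(v,s)\in V_\triangle\times Q$, and she can simply copy $\tau'(v)$ there. Any play consistent with this strategy projects to a play of $G$ consistent with $\tau'$, which is a word in $\posodd_{n,d}$. Because $\Aa$ is a PG separator, every run of $\Aa$ over such a word rejects, so regardless of Even's choices in the product the run collected along the play is rejecting and Even loses; by determinacy of $G\times\Aa$, Odd wins there.

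The real content of the theorem lies in the forward direction: without the SFPG hypothesis, the nondeterminism of $\Aa$ could in principle be weaponized by Odd in $G\times\Aa$ even while Even holds a winning strategy in $G$, and the key conceptual step is precisely the invocation of Definition~\ref{def:suitable} to obtain a transition strategy $\sigma$ that Even can execute online, knowing only the history of the product play. The remaining work is routine bookkeeping: checking that the two-phase structure of a step in $G\times\Aa$ (the $V\times Q\to E\times Q$ edge of priority $1$, then the $E\times Q\to V\times Q$ edge of priority $p'$ emitted by $\Aa$) feeds the $G$-edges into $\Aa$ in the correct order and makes the parity condition of $G\times\Aa$ coincide with acceptance of the $\Aa$-run.
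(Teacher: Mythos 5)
Your proposal is correct and follows essentially the same route as the paper's proof: both directions reduce to positional determinacy of $G$, the forward direction combines a positional winning strategy $\tau$ with the transition strategy $\sigma$ supplied by Definition~\ref{def:suitable}, and the backward direction lets Odd copy her positional strategy into the product and uses the fact that a PG separator rejects all of $\posodd_{n,d}$. The extra bookkeeping you mention (the two-phase step structure and the online recoverability of the history $w$) is treated implicitly in the paper but adds nothing different in substance.
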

	
	\begin{proof}
		Suppose first that Even has a winning strategy in $G$.
		Then, by positional determinacy, she also has a positional winning strategy $\tau$ in $G$.
		Because the separator $\Aa$ is SFPG, it has a transition strategy $\sigma$ that is winning for the set of all plays in $G$ that arise from $\tau$.
		Using $\tau$ and $\sigma$ we define an Even's strategy in $G\times\Aa$:
		she plays according to $\tau$ in the $G$ component, and according to $\sigma$ in the $\Aa$ component.
		An infinite play of $G\times\Aa$ following this strategy is a pair: a play $w$ in $G$ following $\tau$, and a run $\rho$ of $\Aa$ reading $w$ and following $\sigma$.
		By assumption on $\sigma$, because $w$ is a play in $G$ that arises from $\tau$, we obtain that $\rho$ is accepting.
		This implies that the considered play of $G\times\Aa$ is won by Even, and thus the considered strategy is winning for Even.
		
		Next, suppose that Even does not have a winning strategy in $G$.
		Then, by positional determinacy, Odd has a positional winning strategy $\tau$ in $G$.
		This strategy can be also used in $G\times\Aa$, as Odd takes decisions only in the $G$ part of $G\times\Aa$.
		Consider a play of $G\times\Aa$ following this strategy; it consists of a play $w$ in $G$ following $\tau$, and of a run $\rho$ of $\Aa$ reading $w$.
		Because $\tau$ is a positional winning strategy for Odd, we have $w\in\posodd_{n,d}$, hence, because the PG separator $\Aa$ rejects all words from $\posodd_{n,d}$,
		the run $\rho$ is rejecting; the play is won by Odd.
		This implies that Even does not have a winning strategy in $G\times\Aa$.
	\end{proof}
	
	Notice that the product game $G\times\Aa$ is larger, but potentially simpler, than $G$.
	For example, if $\Aa$ is a safety automaton, out of a parity game we obtain a safety game; the latter can be solved in linear time.

	We remark that Colcombet and Fijalkow in their recent work~\cite{universal-graphs} define a similar notion of good-for-small-games automata:
	an automaton $\Aa$ is good for $(n,d)$-parity games if it satisfies our Theorem~\ref{thm:safety-game-equiv}, that is,
	if for every game graph $G$ with $n$ nodes and priorities up to $d$, the games $G$ and $G\times\Aa$ are equivalent.
	Such a definition is purely semantical; it does not give any hint which automata are indeed good for $(n,d)$-parity games.
	Our definition of SFPG separators is more concrete: it specifies particular conditions on an automaton (what it should accept / reject, and in which way).
	In Theorem~\ref{thm:safety-game-equiv} we then prove that every SFPG separator can be indeed used to solve parity games 
	(i.e., that it is good for $(n,d)$-parity games, in the terminology of Colcombet and Fijalkow).

\section{Register Automata}\label{sec:lehtinen}

	In this section we express Lehtinen's construction~\cite{Lehtinen} as an SFPG separator $\Rr_{n,d}$.
	Recall that out of a parity game $G$ she constructs an equivalent parity game, which is essentially $G\times\Rr_{n,d}$.
	
	The idea of the construction is to store some recently visited priorities in some number of registers.
	Let us denote $\rn(n)=1+\floor{\log_2 n}$; this is the number of registers needed to solve games with $n$ nodes.
	In Lehtinen's work, $\rn(n)$ is called a register index.\footnote{%
		While there exist games with $n$ nodes that can be solved using less registers (i.e., games with a smaller register index),
		$1+\floor{\log_2 n}$ is the upper bound.}
	
	For all positive numbers~$n$ and~$d$, such that~$d$ is even, we define
	a \emph{non-deterministic parity automaton}~$\Rr_{n, d}$ in the
	following way. 
	\begin{itemize}
	\item
		The set of states of~$\Rr_{n, d}$ is the set of 
		non-increasing $\rn(n)$-sequences 
		$\seq{r_{\rn(n)}, \dots, r_2, r_1}$ of ``registers'' that hold
		numbers in $\set{1, 2, \dots, d}$.  
		The initial state is $\seq{1, 1, \dots, 1}$. 
	
	\item
		For every state $s=\seq{r_{\rn(n)}, \dots, r_2, r_1}$
		and
		letter $e = (u, p, v) \in \Sigma_{n, d}$, we define the 
		\emph{update of~$s$ by~$e$} to be the state
		$\seq{r_{\rn(n)}, \dots, r_{k+1}, p, \dots, p}$, where $k$ is the
		greatest index such that $r_1,\dots,r_k < p$.   
	
	\item
		For every state $s=\seq{r_{\rn(n)}, \dots, r_2, r_1}$
		and for every~$k$, $1 \leq k \leq \rn(n)$, we define the  
		\emph{$k$-reset of~$s$} to be the state 
		$\seq{r_{\rn(n)}, \dots, r_{k+1}, r_{k-1}, \dots, r_2, 1}$. 
		We say that this $k$-reset is even (odd) if $r_k$ is even (odd, respectively).
	
	\item
		For every state~$s$ and letter $e\in \Sigma_{n, d}$, if $s'$ is the update of~$s$ by~$e$,
		then in the transition relation there is a transition $(s, e, 1, s')$, called a \emph{non-reset} transition. 
	
	\item
		For every state~$s$, 
		letter $e \in \Sigma_{n, d}$, 
		and for every~$k$, $1 \leq k \leq \rn(n)$, 
		if $s'$ is the update of~$s$ by~$e$,
		and $s''$ is the even $k$-reset of~$s'$,
		then in the transition relation there is a transition $(s, e, 2k, s'')$, called an \emph{even reset of register~$k$}. 
	
	\item
		For every state~$s$,
		letter $e \in \Sigma_{n, d}$, 
		and for every~$k$, $1 \leq k \leq \rn(n)$, 
		if $s'$ is the update of~$s$ by~$e$,
		and $s''$ is the odd $k$-reset of~$s'$,
		then in the transition relation there is a transition $(s, e, 2k+1, s'')$, called an \emph{odd reset of register~$k$}. 
	
	\item
		There are no other transitions in $\Rr_{n,d}$ except those specified above.
	\end{itemize}
	
	In Theorem~\ref{thm:ok} we prove that $\Rr_{n,d}$ is indeed an SFPG separator.
	Moreover, we prove that its runs are of a special form, as specified by Definition~\ref{def:bad};
	this is useful in Section~\ref{sec:safety}, where we argue that the product game $G\times\Rr_{n,d}$ can be solved faster than an arbitrary parity game.
	
	\begin{definition}~\label{def:bad}
		Let $\rho$ be a run of a parity automaton.
		We define $\bad(\rho)$ to be the greatest number $m$ such that in $\rho$ there is an infix containing $m$ transitions emitting some odd priority $p$ and no transitions emitting higher priority.
	\end{definition}
	
	\begin{theorem}\label{thm:ok}
		The automaton $\Rr_{n, d}$ is a strong SFPG separator.
		Moreover, for every game graph $G$ with~$n$ nodes and priorities up to~$d$, for every Even's positional winning strategy $\tau$ in $G$,
		and for every run $\rho$ of $\Rr_{n, d}$ that follows the transition strategy existing by Definition~\ref{def:suitable} and that reads a play in $G$ arising from $\tau$,
		it holds that $\bad(\rho)\leq n-1$.
	\end{theorem}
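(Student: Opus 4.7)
The theorem makes three claims: that $\Rr_{n,d}$ is a strong PG separator, that it is SFPG, and that the witnessing transition strategy yields runs with $\bad(\rho)\le n-1$. I plan to handle the strong-separator claim by a direct analysis of register dynamics, and the remaining two claims jointly by one inductive construction of the transition strategy.

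For the strong-separator claim, I would fix $w\in\limsupodd_{n,d}$ with greatest infinitely-occurring priority $p$ (odd) and any run $\rho$ on $w$. After a finite prefix only priorities $\le p$ appear, so any register holding a value $>p$ at that point is immune to updates (which overwrite only registers with value $<p$) and can disappear only via a reset at its own index, whose emitted priority is bounded by $2\cdot\rn(n)+1$; hence priorities above that bound are emitted only finitely often. Meanwhile every read of $p$ writes $p$ into register~$1$, and by pigeonhole over the $\rn(n)$ registers some register stabilises at value $p$ and is reset infinitely often, emitting the odd priority $2k+1$ at a fixed level~$k$ infinitely often. A careful comparison then identifies this odd emission as the largest infinitely occurring, showing $\rho$ is rejecting.

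For SFPG and the $\bad$ bound I would construct the transition strategy $\sigma$ by induction on $n$. Given a positional winning strategy $\tau$ for Even in $G$, every cycle of the strategy subgraph $G_\tau$ has even largest priority (otherwise Odd could loop there, contradicting that $\tau$ wins). In the inductive step I would choose a subset $U\subsetneq V$ with $|V\setminus U|<n/2$ (so $\rn(|V\setminus U|)\le\rn(n)-1$) such that every cycle of $G_\tau$ meets $U$ and $\tau$ restricted to the induced subgame on $V\setminus U$ remains winning for Even; such a $U$ should come from a Zielonka-style attractor decomposition of $G_\tau$. The strategy $\sigma$ then proceeds in epochs delimited by entries into $U$: inside an epoch it invokes the inductively defined strategy on registers $1,\dots,\rn(n)-1$, and at every epoch boundary it performs an $\rn(n)$-reset, arranged so that register $\rn(n)$ at that moment holds an even value, namely the largest priority of the cycle of $G_\tau$ just closed (which is even by choice of $\tau$).

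The $\bad$ bound then follows quickly: any infix $I$ of $\rho$ whose emitted priorities all stay at most some odd $p$ cannot cross an $\rn(n)$-reset, since that reset emits the strictly larger even $2\cdot\rn(n)$; hence $I$ lies within a single epoch, and since $G_\tau$ restricted to $V\setminus U$ is acyclic, one epoch spans at most $|V\setminus U|\le n-1$ transitions, giving the bound. Acceptance of $\rho$ along any play from $\tau$ follows because infinitely many even $\rn(n)$-resets fire, making $2\cdot\rn(n)$ the largest infinitely occurring emitted priority. The main obstacle will be (a) producing the decomposition $U$ that simultaneously satisfies the size, cycle-cover, and strategy-preservation conditions, and (b) verifying the delicate invariant that at each epoch boundary the top register really does hold an even value---that is, that the lower-level reset discipline inside an epoch does not corrupt the top slot. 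A Zielonka-style recursion on $G_\tau$ tailored to the register update semantics seems the right vehicle for both.
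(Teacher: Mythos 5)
Your plan for the first claim (rejecting every word in $\limsupodd_{n,d}$) is essentially the paper's argument, modulo two loose ends: you must also handle runs that perform no resets from some point on (they are rejecting because only priority $1$ is emitted), and the ``careful comparison'' you defer is exactly the content of the proof --- one fixes the \emph{largest} index $k$ that is reset infinitely often, argues that after $k$ further resets register $k$ holds a value $\leq p$ forever, and concludes that odd resets of register $k$ occur infinitely often while nothing higher is emitted. (Your remark about registers holding values $>p$ being immune, and the bound $2\rn(n)+1$, does no work here: every emitted priority is $\leq 2\rn(n)+1$ by construction, so the statement that higher priorities occur finitely often is vacuous.)

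The SFPG part has a genuine gap. You recurse on $n$ via a single set $U$ with $|V\setminus U|\leq n/2$ meeting every cycle of $G_\tau$, delimit epochs by entries into $U$, and perform a reset of register $\rn(n)$ at every epoch boundary, hoping that the register then holds an even value (``the largest priority of the cycle just closed''). But an entry into $U$ is not a cycle closure, and nothing ties it to the reading of a high even priority: if the play enters $U$ along a priority-$1$ edge shortly after register $\rn(n)$ was last reset, that register holds $1$ (or a small odd value) and your mandatory reset is odd, emitting $2\rn(n)+1$; if this happens once per traversal of the main cycle, the run is rejecting. The invariant you flag as ``delicate'' is in fact unobtainable for this epoch structure. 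The paper avoids the problem by stratifying the recursion by \emph{priority} rather than by node count: at level $k$ the children are the topologically ordered strongly connected components of the subgraph restricted to priorities $\leq 2k-1$, the top-register reset (the ``valuable transition'') is performed only when an edge of priority exactly $2k$ is traversed, and the topological order guarantees that re-entering the unique large component forces such an edge first, so the top register provably holds an even value $\geq 2k$ at that moment. The halving you need for the register count then comes for free (all but one component has size $\leq|S|/2$), whereas your single global split both loses that synchronization and makes the recursive call vacuous: if every cycle meets $U$, then $G_\tau$ restricted to $V\setminus U$ is acyclic, so an induction hypothesis phrased for infinite plays and winning strategies says nothing about the finite path segments the automaton must still read --- the paper's Lemma~\ref{lem:ind-ass} is stated for finite partial runs and arbitrary starting states precisely to make the induction go through.
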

	
	We now prove Theorem~\ref{thm:ok}.
	We start with the easier part, saying that $\Aa$ rejects all words from $\limsupodd_{n,d}$.
	Consider thus a word $w\in\limsupodd_{n,d}$, and a run $\rho$ of $\Rr_{n,d}$ reading this word.
	If from some moment there are no more resets in this run, then indeed $\rho$ is rejecting.
	Otherwise, consider the greatest (odd) priority $p$ occurring in $w$ infinitely often, 
	and consider the greatest index $k$ such that there are infinitely many resets of register $k$ in $\rho$.
	From some moment on, in $\rho$ no priority higher than $p$ is read, and there is no reset of any register $l>k$.
	A little bit later, after $k$ resets of register $k$, the value of register $k$ is at most $p$ for the rest of the run.
	Then, infinitely many times the priority $p$ is read, it is stored to register $k$, never overwritten by anything larger, and then reset.
	This means that there are infinitely many odd resets of register $k$, emitting priority $2k+1$, while no higher priorities are emitted (except in the finite prefix that we have skipped).
	In consequence, $\rho$ is rejecting.
		
	For the remaining part of the proof, fix a game graph $G$, and an Even's positional winning strategy $\tau$.
	Let $G_\tau$ be the strategy subgraph of $G$ with respect to $\tau$, and let $V_\tau$ be the set of those nodes of $G_\tau$ that are reachable from the starting node.
	For a priority $p$, and for $S\subseteq V_\tau$, let $G_{S,p}$ be the subgraph of $G_\tau$ that contains only nodes that belong to $S$ and only edges of priority not larger than $p$.
	
	We now define a \emph{game tree} of $G_\tau$ in a top-down fashion.
	The root of this tree is $(\ceil{d/2},V_\tau)$.
	Let now $(k,S)$ be an (already defined) node of this tree such that $k\geq 1$,
	and let $S_1,S_2,\dots,S_m$ be (the sets of nodes of) all the strongly connected components of $G_{S,2k-1}$.
	We assume that $S_1,S_2,\dots,S_m$ are sorted topologically, that is, that in $G_{S,2k-1}$ there are no edges to $S_i$ from $S_j$ when $i<j$ 
	(if there are multiple such orders of $S_1,S_2,\dots,S_m$, we fix one of them).
	In such a case, $(k-1,S_1),(k-1,S_2),\dots,(k-1,S_m)$ are children of $(k,S)$, in this order.
	\begin{figure*}
		\begin{center}
			\import{pics/}{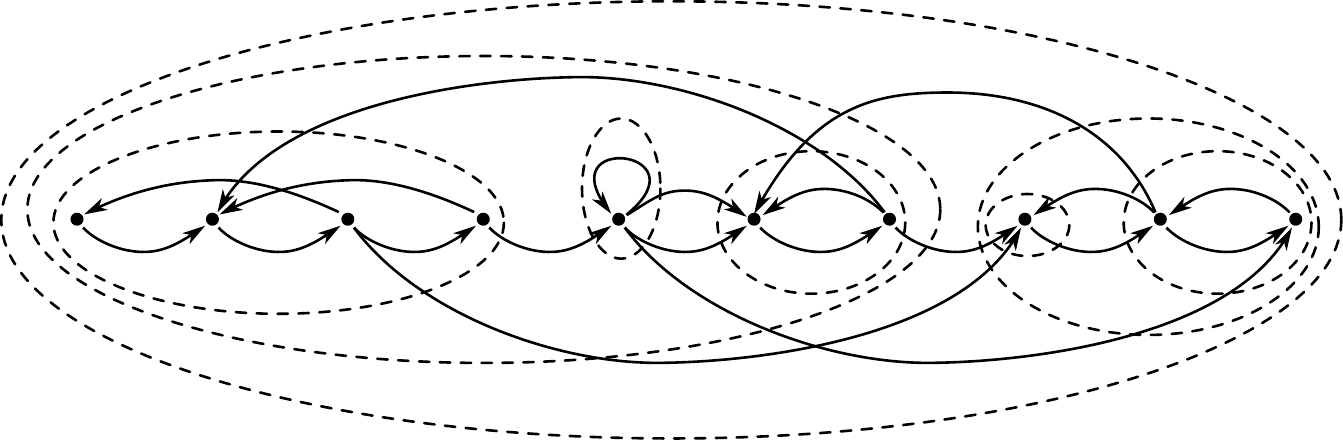_tex_ok}
		\end{center}
		\caption{An example of a strategy subgraph, together with the corresponding game tree.
		Dashed circles depict nodes of the game tree: the largest circle is the root $(3,S)$, inside it we have two children $(2,S_1), (2, S_2)$, ordered left to right, and so on; 
		additionally, every node $x$ of the graph also constitutes a node $(0,\{x\})$ (i.e., a leaf) of the game tree.
		Notice that edges with odd priorities can only go right.
		This is a strategy subgraph, so nodes belonging originally to Even have here only a single successor.
		}
		\label{fig:game-tree}
	\end{figure*}
	An example of a game tree is presented in Figure~\ref{fig:game-tree}.
	
	Notice that if $S_i$ is a strongly connected component of $G_{S,2k-1}$, then it does not contain edges of priority $2k-1$.
	Indeed, if such an edge existed inside a strongly connected component,
	there would be a cycle in $G_{S,2k-1}$ (i.e., in $G_\tau$) on which the maximal priority would be $2k-1$ (odd);
	by reaching such a cycle (recall that $S\subseteq V_\tau$ contains only nodes reachable in $G_\tau$ from the starting node) and repeating it forever, 
	we would obtain a play won by Odd, while all plays in $G_\tau$ are, by assumption, won by Even.
	It follows that $S_i$ is actually also a strongly connected component of $G_{S,2k-2}$.
	In other words, $G_{S_i,2k-2}=G_{S_i,2k-1}$.
	
	For a node $(k,S)$ of the game tree, and for $l<k$, let $\fst_l(S)=S'$ for $(l,S')$ being the leftmost descendant of $(k,S)$ located on level $l$.
	
	The following lemma states our thesis in a form suitable for induction.
	It uses a notion of a \emph{partial run}, which is defined like a run, but it needs not to start in the initial state, and it needs not to be infinite.
	
	\begin{lemma}\label{lem:ind-ass}
		Let $(k,S)$ be a node of the game tree of $G_\tau$, 
		let $s$ be a state of $\Rr_{n,d}$,
		and let $\xi$ be a nonempty (finite or infinite) path in $G_{S,2k}$ starting in a node $v$.
		Assume that if an odd number $2l+1$ (where $l\geq 1$) is contained in some of the registers $1,2,\dots,\rn(|S|)$ of $s$, 
		then $l<k$ and $v\not\in\fst_l(S)$.
		Under these assumptions, there exists a partial run $\rho$ from $s$ reading $\xi$, such that 
		\begin{enumerate}
		\item	in $\rho$ there are no resets of registers above $\rn(|S|)$,
		\item	if the register $\rn(|S|)$ in $s$ contains an even number not smaller than $2k$, then in $\rho$ there are no odd resets of the register $\rn(|S|)$,
		\item	$\bad(\rho)\leq|S|-1$, and
		\item	$\rho$ follows a transition strategy (that may depend on $G,\tau,k,S,s$): in every step, 
			the non-determinism is resolved basing only on the prefix of $\xi$ read so far and on the next edge of $\xi$ that should be read.
		\end{enumerate}
	\end{lemma}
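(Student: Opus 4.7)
The plan is to prove Lemma~\ref{lem:ind-ass} by structural induction on the game tree of $G_\tau$, using the induction hypothesis at the children $(k-1, S_1), \ldots, (k-1, S_m)$ of $(k, S)$. The base case at a leaf $(0, S)$ is immediate: $G_{S, 0}$ has no edges (all priorities are positive), so $\xi$ is a single vertex, $\rho$ is the zero-transition partial run at $s$, and all four conditions hold trivially.

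For the inductive step at $(k, S)$ with $k \geq 1$, I would decompose $\xi$ into ``pieces'' separated by ``boundary edges'', where a boundary edge is either a priority-$2k$ edge or an edge of $G_{S, 2k-1}$ crossing between SCCs (which necessarily moves strictly forward in the chosen topological order, from $S_i$ to $S_{i'}$ with $i < i'$). Each maximal piece between boundaries lies inside a single $S_i$ and uses only edges of priority $\leq 2k-2$, because a priority-$(2k-1)$ edge inside an SCC would close an odd-max cycle in $G_\tau$, contradicting that $\tau$ is winning; each piece is therefore a path in $G_{S_i, 2(k-1)}$, so the IH at $(k-1, S_i)$ produces the corresponding run $\rho_i$, provided the IH's register precondition is satisfied when the piece starts. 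For each boundary edge I would take a transition that only resets register $\rn(|S|)$, choosing the reset so that it is even whenever the edge has priority $2k$ (emitting $2\rn(|S|)$), is odd only for priority-$(2k-1)$ edges and only when the premise of condition~2 is not in force, and re-establishes the IH precondition for the next piece (in particular, no surviving odd value $2l+1$ with $l \geq k-1$ in registers up to $\rn(|S_i|)$, and any surviving $2l+1$ with $l < k-1$ paired with a piece-starting vertex outside $\fst_l(S_i)$). Conditions~1 and~4 follow directly from the construction, and condition~2 follows because, under its premise $r_{\rn(|S|)} \geq 2k$, the priorities $\leq 2k$ read along $\xi$ never overwrite $r_{\rn(|S|)}$, so every boundary $\rn(|S|)$-reset invoked is even, and the premise propagates into the IH's condition~2 on each piece.

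The main obstacle is condition~3, $\bad(\rho) \leq |S|-1$. The central combinatorial fact is that any infix of $\xi$ avoiding priority-$2k$ edges lies in $G_{S, 2k-1}$ and thus contains at most $m-1 \leq |S|-1$ cross-SCC edges, since they must advance in topological order; combined with the IH bound $\bad(\rho_i) \leq |S_i|-1$ on each piece, this yields the target bound whenever the witnessing odd priority is $2\rn(|S|)+1$ or is witnessed entirely inside a single piece. The delicate case is a witnessing odd priority $2l+1$ whose infix straddles several pieces connected by boundaries: there the bound must be assembled additively from piece-level and boundary-level contributions. This is exactly where the $\fst_l(S)$ clause of the precondition is essential --- it limits how an odd $2l+1$ already lingering in the registers at the start of $\rho$ can trigger further $(2l+1)$-emissions, essentially because the fact that $v \notin \fst_l(S)$ means $v$ has already moved past the leftmost $l$-level descendant of $(k, S)$, so only the remaining $l$-level descendants (whose total size is bounded by what remains of $|S|$) can still contribute such emissions before a higher even priority intervenes.
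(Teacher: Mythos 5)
Your overall skeleton matches the paper's: induction along the game tree, decomposition of $\xi$ into maximal pieces inside single SCCs of $G_{S,2k-1}$ (which contain no priority-$(2k-1)$ edges, by the odd-cycle argument), the induction hypothesis applied per piece, and the topological-order bound $\sum_i(|S_i|-1)+(m-1)=|S|-1$ for Point~3. However, there is a genuine gap in your boundary handling. You claim that a single transition resetting only register $\rn(|S|)$ at each boundary edge ``re-establishes the IH precondition for the next piece (in particular, no surviving odd value $2l+1$ with $l\geq k-1$ in registers up to $\rn(|S_i|)$).'' This is false. When the boundary edge has priority $2k-1$, the update writes $2k-1$ into \emph{several} low registers (all those currently holding values $<2k-1$), and a $k$-reset of register $\rn(|S|)$ merely shifts the lower registers up by one slot and inserts a $1$ at the bottom --- the copies of $2k-1=2(k-1)+1$ survive in registers up to $\rn(|S_{i'}|)$. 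The induction hypothesis at $(k-1,S_{i'})$ then demands $l<k-1$ for any such stored value, which fails for $l=k-1$, so the hypothesis cannot be invoked and your induction does not go through. The automaton can reset only one register per transition, so this cleanup inherently requires several steps; the paper's proof supplies exactly this missing mechanism via ``preparatory transitions'': upon entering a new component $S_i$ (for $i\geq 2$), it spends the first $\min(r,|\xi'|)$ edges of the piece resetting, one per edge, every low register holding an odd value greater than $1$, and only then applies the induction hypothesis to the remainder $\xi''$ of the piece. These extra resets are also what produce the $m-1$ term in the Point~3 accounting (at most one reset of a given register per preparatory block, $m-1$ blocks); your version would instead have to charge boundary odd resets of register $\rn(|S|)$ against that budget, and for cross-SCC edges of priority below $2k$ the paper deliberately uses a \emph{non-reset} (regressive) transition rather than a reset of the top register.

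Two secondary points. First, your argument for Point~2 glosses over what happens to register $\rn(|S|)$ at a valuable (priority-$2k$) reset: its new content is the old content of register $\rn(|S|)-1$, and showing this is an even number $\geq 2k$ requires that register to exist, i.e.\ $\rn(|S|)\geq 2$; the paper treats the degenerate case $|S|=1$ separately (there one must also argue that all loops have even priority and that no non-reset transitions occur), and your proof is silent on it. Second, propagating the premise of condition~2 into the children is only meaningful for the unique component $S_{\max}$ with $\rn(|S_{\max}|)=\rn(|S|)$; for the others one must instead invoke Point~1 of the induction hypothesis to rule out resets of register $\rn(|S|)$ altogether. These are fixable, but the missing preparatory-reset phase is a real hole in the argument as written.
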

	
	In order to finish the proof of Theorem~\ref{thm:ok}, we simply use Lemma~\ref{lem:ind-ass} for $(k,S)=(\ceil{d/2},V_\tau)$, and for $s=\seq{1, 1, \dots, 1}$ 
	(i.e., for the initial state of $\Rr_{n,d}$).
	Indeed, every play $w$ in $G$ that arises from the strategy $\tau$ is a path in $G_{V_\tau,2\ceil{d/2}}$;
	thus the lemma gives is a run $\rho$ reading $w$.
	By Point 3, $\bad(\rho)$ is finite, which implies that $\rho$ is accepting.
	Because this holds for all $G$ and $\tau$, and because $\Aa$ rejects all words from $\limsupodd_{n,d}$ (as shown at the beginning), we already know that $\Aa$ is a strong PG separator.
	Point 4 says that $\rho$ is constructed following a transition strategy, so $\Aa$ is SFPG.
	The condition $\bad(\rho)\leq|V_\tau|-1$ from Point 3 gives us the second part of the theorem's statement.
	
	\begin{proof}[Proof of Lemma~\ref{lem:ind-ass}]
		We proceed by induction on $k$.
		If $k=0$, then there is no nonempty path $\xi$ in $G_{S,2k}$ (this graph has no edges), so the lemma trivially holds.
	
		For the rest of the proof, suppose that $k\geq 1$.
		Let $(k-1,S_1),\dots,(k-1,S_m)$ be the children of $(k,S)$.
		By definition, $S_1,\dots,S_m$ form a division of $S$.
		Obviously $|S_i|\leq|S|$, so $\rn(|S_i|)\leq\rn(|S|)$, for all $i\in\{1,\dots,m\}$.
		
		Notice first that no matter how $\rho$ is constructed, none of its last $\rn(|S|)$ registers contains an odd number greater than $2k$, in all states of $\rho$%
		---call this property ($\spadesuit$).
		This holds because the condition is satisfied in the first state $s$ of $\rho$, and then only edges of priority up to $2k$ are read.
		
		We construct a run $\rho$ reading $\xi$ by repeating the following steps:
		\begin{itemize}
		\item	in the remaining part of $\xi$, let $\xi'$ be the maximal prefix that stays in $G_{S_i,2k-1}$ (i.e., in $G_{S_i,2k-2}$) for some $i$ 
			(possibly $|\xi'|=0$, i.e., already the first edge leaves $G_{S_i,2k-1}$);
		\item	if $i\geq 2$, then
			\begin{itemize}
			\item	let $j_1<j_2<\dots<j_r$ be the numbers of registers among $1,\dots,\rn(|S_i|)$ which, in the current state, contain an odd priority higher than $1$;
			\item	while reading the first $\min(r,|\xi'|)$ edges of $\xi'$, we perform resets of the registers $j_1, j_2, \dots,\allowbreak j_{\min(r,|\xi'|)}$, consecutively%
			---call these transitions \emph{preparatory transitions};
			\end{itemize}
		\item	let $\xi''$ be the part of $\xi'$ that remains to be read;
		\item	if $|\xi''|>0$, then we use the induction assumption with $k-1$ as $k$ and with $S_i$ as $S$ to construct a fragment of a run that reads $\xi''$
			(we prove below that the induction assumption can indeed be used)---call the fragment of $\rho$ obtained this way a \emph{block of local transitions};
		\item	we have now read the whole $\xi'$;
		\item	if $\xi$ already ended, we stop the construction;
		\item	otherwise, the next edge of $\xi$ leads outside $G_{S_i,2k-1}$;
		\item	if this edge has priority $2k$, we reset the register $\rn(|S|)$ while reading this edge---call this a \emph{valuable transition};
		\item	otherwise, we perform a non-reset transition reading this edge---call this a \emph{regressive transition};
		\item	we repeat the procedure from the beginning.
		\end{itemize}
		
		We have to prove that indeed the induction assumption can be used above.
		To this end, consider the state $s'$ from which we are about to start a block of local transitions reading a path $\xi''$ in $G_{S_i,2k-2}$,
		and let $v'$ be the first node of this path.
		Suppose that some of the registers $1,2,\dots,\rn(|S_i|)$ of $s'$ contains an odd number $2l+1$, where $l\geq 1$.
		We have to prove that $l<k-1$, and that $v'\not\in\fst_l(S_i)$.
		By Property ($\spadesuit$), $l<k$.
		There are three cases:
		\begin{itemize}
		\item	Suppose that $i=1$ and this is the first time when the loop is used.
			Then there are no preparatory transitions, so $s'=s$ and $v'=v$.
			By assumptions of the lemma, $v\not\in\fst_l(S)$.
			If $l=k-1$, we would have $v\not\in\fst_l(S)=S_1$, while $v\in S_i=S_1$; thus $l<k-1$.
			We then have $v'=v\not\in\fst_l(S)=\fst_l(S_1)$.
		\item	Suppose that $i=1$ and $\xi''$ is preceded in $\xi$ by some edge.
			This edge is not an edge of $G_{S_1,2k-1}$, by maximality of the previous block of local transitions.
			By the definition of a game tree, there are no edges in $G_{S,2k-1}$ coming to $S_1$ from $S\setminus S_1$
			($S_1,\dots,S_m$ are topologically sorted strongly connected components of $G_{S,2k-1}$).
			Thus, the edge preceding $\xi''$ has priority $2k$.
			After reading this edge, all registers contain value $2k$ or higher, or $1$ (if there was a reset);
			they cannot contain $2l+1$ with $1\leq l<k$.
		\item	Otherwise, $i\geq 2$.
			Then, we are just after preparatory transitions.
			All odd values (greater than $1$) present before these transitions were reset to $1$.
			Thus, priority $2l+1$ appears in a register of $s'$ because it was read during preparatory transitions,
			and later no edges with priority higher than $2l+1$ were read.
			This already implies that $l<k-1$, because only edges of priority up to $2k-2$ are read during preparatory transitions.
			Edges of priority up to $2l+1$ cannot lead to $\fst_l(S_i)$ from $S_i\setminus\fst_l(S_i)$:
			by the definition of the game tree, for every level $j$ with $l\leq j\leq k-2$, there are no edges of priority up to $2j+1$ leading to $\fst_j(S_i)$ from its (following) siblings.
			Moreover, there are no edges of priority $2l+1$ inside $\fst_l(S_i)$.
			Thus, after reading an edge of priority $2l+1$, and then some edges of priority up to $2l+1$, we cannot end inside $\fst_l(S_i)$.
		\end{itemize}
	
		We now have to check Points 1-4 from the statement of the lemma.
		Point 1 is immediate: preparatory and valuable transitions reset only registers up to $\rn(|S|)$, regressive transitions do not reset anything,
		and local transitions, by Point 1 of the induction assumption, also reset only registers up to $\rn(|S|)$ (recall that $\rn(|S_i|)\leq\rn(|S|)$).
		
		Point 4 is also immediate: by definition we create $\rho$ in a deterministic way.
	
		While proving Points 2-3 we assume that $|S|\geq 2$;
		the degenerate case of $|S|=1$ is handled at the very end.
		
		We now prove Point 2 saying that in $\rho$ there are no odd resets of the register $\rn(|S|)$ if this register in the first state of $\rho$ contains an even number not smaller than $2k$.
		Simultaneously, we prove that odd resets of the register $\rn(|S|)$ can appear in $\rho$ only before the first valuable transition---call this property ($\clubsuit$).
		Notice first that when we visit some $S_i$ such that $\rn(|S_i|)<\rn(|S|)$, 
		then neither preparatory transitions, nor local transitions (by Point 1 of the induction assumption) reset the register $\rn(|S|)$.
		On the other hand, $\rn(|S_i|)=\rn(|S|)$ implies that $|S_i|>|S|/2$, which is possible only for one component $S_i$; call it $S_{\max}$.
		Regressive transitions do not reset anything.
	
		It remains to handle valuable transitions, and transitions reading edges from $G_{S_{\max},2k-2}$ in the case of $\rn(|S_{\max}|)=\rn(|S|)$;
		these transitions may reset the register $\rn(|S|)$.
		Recall that, in all states of $\rho$, none of the last $\rn(|S|)$ registers can contain an odd number greater than $2k$ (Property ($\spadesuit$)).
		Consider a valuable transition.
		After the update by priority $2k$, the registers $\rn(|S|)$ and $\rn(|S|)-1$ contain even numbers not smaller than $2k$ 
		(we put there $2k$ during the update, unless a larger even priority is already there).
		Thus, when we reset the register $\rn(|S|)$ during a valuable transition, its value is even.
		Moreover, after this transition, the register $\rn(|S|)$ still contains an even number not smaller than $2k$,
		moved there from the register $\rn(|S|)-1$ (here it is important that $\rn(|S|)\geq 2$, so that the register $\rn(|S|)-1$ indeed exists).
	
		By the definition of a game tree, if we leave $G_{S_{\max},2k-1}$, then before entering $G_{S_{\max},2k-1}$ again there is 
		an edge of priority $2k$, resulting in a valuable transition (edges of priority up to $2k-1$ cannot go to $S_i$ from $S_j$ when $i<j$).
		Assuming that the register $\rn(|S|)$ of $s$ (i.e., of the state from which we start $\rho$) contains an even number not smaller than $2k$,
		it follows that whenever we reach $S_{\max}$, the register $\rn(|S|)$ contains an even number not smaller than $2k$
		(either existing there from the beginning of $\rho$, or since the last valuable transition).
		Thus, the register $\rn(|S|)$ is not reset during preparatory transitions, and by Point 2 of the induction assumption, 
		there are no odd resets during the considered block of local transitions for $S_{\max}$; we obtain Point 2.
		
		For Property ($\clubsuit$), we do not have the assumption that at the very beginning the register $\rn(|S|)$ contains an even number not smaller than $2k$.
		In consequence, there may be odd resets of the register $\rn(|S|)$ while $S_{\max}$ is visited for the first time, but later, after the first valuable transition, such resets are again impossible.
	
		Next, concentrate on Point 3.
		We need to prove that:
		\begin{itemize}
		\item	for every $r$, in every infix of $\rho$ without resets of registers above $r$, there are at most $|S|-1$ odd resets of the register $r$, and
		\item	in every infix of $\rho$ without any resets, there are at most $|S|-1$ (non-reset) transitions.
		\end{itemize}
		For $r>\rn(|S|)$ there are no $r$-resets at all (Point 1).
		Take some $r\leq\rn(|S|)$, and consider an infix $\rho'$ of $\rho$ without any resets of registers above $r$;
		let $\xi'$ be the path read by $\rho'$.
		We are about to bound the number of odd resets of the register $r$ in $\rho'$.
		If $r<\rn(|S|)$, the infix $\rho'$ does not contain valuable transitions, as they reset the register $\rn(|S|)$, being above the register $r$.
		If $r=\rn(|S|)$, we can also assume that $\rho'$ does not contain valuable transitions, as anyway, by Property ($\clubsuit$), 
		after the first valuable transition there are no more odd resets of the register $\rn(|S|)$.
		In consequence, $\xi'$ is a path in $G_{S,2k-1}$.
		By the definition of a game tree, such a path can visit components $S_1,S_2,\dots,S_m$ only in an ascending order;
		every $G_{S_i,2k-1}$ is visited by $\xi'$ at most once.
		By Point 3 of the induction assumption, in the block of local transitions in $\rho'$ visiting $S_i$ 
		there are at most $|S_i|-1$ odd resets of the register $r$ without any resets of registers above $r$ in between.
		Moreover, in every block of preparatory transitions, we reset the register $r$ at most once, and there are $m-1$ such blocks: before $S_2,S_3,\dots,S_m$, but not before $S_1$.
		Together, there are at most $\sum_{i=1}^m(|S_i|-1)+m-1=|S|-1$ odd resets of the register $r$ in $\rho'$, as wanted.
	
		The situation is similar when we consider an infix $\rho'$ of $\rho$ without any resets, and we want to bound its length.
		Again, it visits every $G_{S_i,2k-1}$ at most once.
		In every $S_i$ there are at most $|S_i|-1$ non-reset transitions in a row, by Point 3 of the induction assumption, and we have at most $m-1$ regressive transitions.
		
		This finishes the proof when $|S|\geq 2$.
		It remains to prove Points 2-3 in the degenerate case of $|S|=1$, when $\rn(|S|)=1$.
		In this case, all edges in $G_{S,2k}$ are loops around the only node in $S$.
		None of them can have an odd priority, because by reaching this node and then repeating this loop we would obtain a play won by Odd, while by assumption all plays in $G_\tau$ are won by Even.
		Moreover, by assumption, if the register $1$ of $s$ (i.e., of the state from which we start $\rho$) contained an odd number $2l+1$,
		then $l<k$ and $v\not\in\fst_l(S)$.
		But, because $|S|=1$, we have $\fst_l(S)=S$, and $v\in S$.
		Thus, the register $1$ of $s$ contains an even number.
		In consequence, in all states of $\rho$ the last register (the register number $\rn(|S|)$) contains either an even number or $1$;
		we then update it by an even number (so it cannot contain $1$ after this update), and then we possibly reset it.
		This means that we can only have even resets of the register $\rn(|S|)$, which gives Point 2.
		For Point 3, we also need to know that there are no non-reset transitions.
		But observe that for $|S|=1$ there are no regressive transitions (all edges of priority up to $2k-1$ stay inside $G_{S_1,2k-1}$),
		and there are no non-reset transitions among local transitions, by Point 3 of the induction assumption.
	\end{proof}
	
	We remark that the proof presented above is based on Lehtinen's work~\cite{Lehtinen}.
	We only prove a slightly stronger property, and we expand some details that in Lehtinen's paper are treated in a quite sketchy way.
	
	Because states of $\Rr_{n,d}$ consist of non-increasing $\rn(n)$-sequences of priorities in $\{1,\dots,d\}$, and because $\rn(n)=1+\floor{\log_2 n}$,
	the number of states of $\Rr_{n,d}$ is
	\begin{align*}
		\eta_{n,d} = \binom{\rn(n)+d-1}{\rn(n)}=d^{O(\log n)}=n^{O(\log d)}\,;
	\end{align*}
	from every state the automaton has $\rn(n)+1$ transitions reading every letter $e\in\Sigma_{n,d}$ (a non-reset transition, and a reset transition for every register).
	In consequence, for a game graph $G$ with $n$ nodes, $m$ edges, and priorities up to $d$, the product game $G\times\Rr_{n,d}$ has $(n+m)\cdot\eta_{n,d}$ nodes,
	$m\cdot\eta_{n,d}\cdot(\rn(n)+2)$ edges, and uses $2\cdot \rn(n)+1$ priorities.
	Using a standard (i.e., not quasi-polynomial-time) algorithm to solve such a game, the number of priorities goes to the exponent, thus we obtain complexity $n^{O(\log d\cdot\log n)}$.

\section{Safety Register Automata}\label{sec:safety}
	
	In the final section we show that the property $\bad(\rho)\leq n-1$ obtained in Theorem~\ref{thm:ok} allows us to solve the product game $G\times\Rr_{n,d}$ faster: 
	in time $n^{O(\log n)}$ instead of $n^{O(\log d\cdot\log n)}$.
	We could prove this directly, but instead we modify the parity automaton $\Rr_{n,d}$ into a \emph{safety} automaton $\Ss_{n,d}$.
	
	We define the safety automaton~$\Ss_{n, d}$ in the following way:
	\begin{itemize}
	\item
		The set of states of~$\Ss_{n, d}$ is the set of pairs:
		the first component is a state of the automaton~$\Rr_{n, d}$ and the
		other component is an $(\rn(n)+1)$-sequence 
		$\seq{c_{\rn(n)}, \dots, c_1, c_0}$ of \emph{counters} with
		values in $\set{1, \dots, n}$;
		additionally, in $\Ss_{n, d}$ we have a designated rejecting state $\rej$.  
	
		Throughout this definition, $c$ always stands for the
		sequence $\seq{c_{\rn(n)}, \dots, c_1, c_0}$. 
	
	\item
		The initial state is $(s^0, c^0)$, where $s^0$ is the initial state
		of~$\Rr_{n, d}$ and $c^0 = \seq{n, n, \dots, n}$. 
	
	\item
		For each transition $(s, e, 2k, s')$ in~$\Rr_{n, d}$ that is an even
		reset of the register~$k$, we have a transition 
		$\big((s, c), e, 2, (s', c')\big)$ in~$\Ss_{n, d}$, 
		where 
		$c' = \seq{c_{\rn(n)}, \dots, c_{k+1}, c_k, n, \dots, n}$. 
		
	\item
		For each transition $(s, e, 2k+1, s')$ in~$\Rr_{n, d}$ that has an odd priority (i.e., is a non-reset transition, or is an odd reset of the register~$k$), 
		we have a transition 
		$\big((s, c), e, 2, (s', c')\big)$ in~$\Ss_{n, d}$, 
		where 
		$c' = \seq{c_{\rn(n)}, \dots, c_{k+1}, c_k - 1, n, \dots, n}$, 
		if $c_k > 1$. 
	
	\item
		For each transition $(s, e, 2k+1, s')$ in~$\Rr_{n, d}$ that has an odd priority
		we have a transition 
		$\big((s, c), e, 1, \rej\big)$ in~$\Ss_{n, d}$, 
		where $c_k=1$.
	
	\item
		Moreover, for every letter $e$, we have a transition $(\rej, e, 1, \rej)$ in~$\Ss_{n, d}$.

	\item
		There are no other transitions in~$\Ss_{n, d}$ except those specified above.
	\end{itemize}
	
	\begin{theorem}\label{thm:s-ok}
		The automaton $\Ss_{n, d}$ is a strong SFPG separator.
	\end{theorem}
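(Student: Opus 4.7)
The plan is to reduce Theorem~\ref{thm:s-ok} to Theorem~\ref{thm:ok} by observing that $\Ss_{n,d}$ is essentially $\Rr_{n,d}$ augmented with counters that enforce the bound $\bad(\rho) < n$. First I will pin down the correspondence between the two automata: ignoring self-loops at $\rej$, each transition of $\Ss_{n,d}$ arises from exactly one transition of $\Rr_{n,d}$ (by forgetting the counter component), and conversely every transition of $\Rr_{n,d}$ from state $s$ on letter $e$ has a unique matching transition from $(s,c)$ in $\Ss_{n,d}$---either a proper one into some $(s',c')$ or one into $\rej$. Thus every partial run of $\Rr_{n,d}$ lifts to a unique partial run of $\Ss_{n,d}$ that either faithfully tracks it or jumps once to $\rej$ and stays there.

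Next I will establish the key invariant: the counter $c_k$ equals $n$ minus the number of priority-$(2k+1)$ transitions since the last transition of priority strictly greater than $2k+1$ in the run (or since the start of the run, if no such transition has occurred). This is immediate from the transition definitions, because $c_k$ is decremented precisely by $(2k+1)$-priority transitions (non-reset transitions when $k=0$, odd $k$-resets when $k\geq 1$) and reset to $n$ precisely by transitions that reset some register $j>k$, while transitions emitting priorities at most $2k$ (other than $2k+1$) leave $c_k$ untouched. Consequently, the lifted run enters $\rej$ if and only if the original $\Rr_{n,d}$ run contains an infix with $n$ transitions emitting some odd priority $2k+1$ and no transition emitting a higher priority---that is, if and only if $\bad(\rho) \geq n$.

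Given this equivalence, the two remaining obligations are short. For rejection of $\limsupodd_{n,d}$, fix $w$ in this set and any $\Ss_{n,d}$ run $\rho'$ on $w$. By Theorem~\ref{thm:ok}, $\Rr_{n,d}$ is a strong PG separator, so the corresponding $\Rr_{n,d}$ run $\rho$ has an odd greatest priority $2k+1$ appearing infinitely often while nothing higher occurs after some finite point; this creates arbitrarily long infixes consisting of $(2k+1)$-transitions with no higher priority, so the invariant forces $\rho'$ into $\rej$. For SFPG, I will take any game graph $G$ with $n$ nodes and priorities up to $d$ and any positional winning strategy $\tau$ for Even, apply Theorem~\ref{thm:ok} to obtain a transition strategy $\sigma$ for $\Rr_{n,d}$ ensuring $\bad(\rho) \leq n-1$ on every play arising from $\tau$, and lift $\sigma$ to a transition strategy $\sigma'$ for $\Ss_{n,d}$ by composing with the transition bijection from the first step. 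Since $\bad(\rho) < n$, no counter is ever exhausted along the lifted run, so $\rej$ is avoided and the strategy is winning; this simultaneously yields acceptance of all plays arising from $\tau$ (hence of $\poseven_{n,d}$ as $(G,\tau)$ range over all possibilities).

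The main subtlety to handle carefully is the bookkeeping behind the counter invariant, in particular checking that counters above the reset register really are left untouched by every transition type and that the resets of the form $c' = \seq{c_{\rn(n)}, \dots, c_{k+1}, \dots, n, \dots, n}$ in the definition of $\Ss_{n,d}$ correctly align with the ``last higher-priority transition'' semantics. Once that is in place, Theorem~\ref{thm:s-ok} follows immediately from Theorem~\ref{thm:ok}, with the safety/parity distinction absorbed entirely into the counters.
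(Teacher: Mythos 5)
Your proposal is correct and follows essentially the same route as the paper: project runs of $\Ss_{n,d}$ onto $\Rr_{n,d}$, observe that counter $c_k$ tracks the number of priority-$(2k+1)$ transitions since the last higher-priority transition, and conclude rejection of $\limsupodd_{n,d}$ from $\Rr_{n,d}$ being a strong separator and acceptance of plays arising from $\tau$ from the bound $\bad(\rho)\leq n-1$ of Theorem~\ref{thm:ok}. Your explicit statement of the counter invariant is a slightly more formal rendering of what the paper says in passing, but the argument is the same.
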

	
	\begin{proof}
		Consider first a word $w\in\limsupodd_{n,d}$, and a run $\rho_\Ss$ of $\Ss_{n,d}$ reading this word;
		we have to prove that $\rho_\Ss$ is rejecting.
		While projecting every state of $\rho_\Ss$ to its first component, we obtain a run $\rho_\Rr$ of $\Rr_{n,d}$ also reading $w$.
		By Theorem~\ref{thm:ok}, $\Rr_{n,d}$ is a strong SFPG separator, so it rejects all words from $\limsupodd_{n,d}$ (cf.~Definition~\ref{def:separator}); $\rho_\Rr$ is rejecting.
		Let $p$ be the largest priority emitted by $\rho_\Rr$ infinitely often; $p$ is odd.
		Consider the suffix of $\rho_\Rr$ in which no larger priority is emitted.
		Concentrate now on $\rho_\Ss$.
		Emitting the priority $p$ by $\rho_\Rr$ results in decreasing the counter $(p-1)/2$ by $1$,
		while emitting priorities lower than $p$ leaves the counter $(p-1)/2$ unchanged.
		Thus, after $n$ transitions emitting the priority $p$ the rejecting state is reached; $\rho_\Ss$ is rejecting.
		
		Next, consider a game graph $G$ with~$n$ nodes and priorities up to~$d$, and an Even's positional winning strategy $\tau$ in $G$.
		Let $\sigma_\Rr$ be the transition strategy in $\Rr_{n,d}$ that is winning for the set of all plays in $G$ that arise from $\tau$, 
		existing because $\Rr_{n,d}$ is SFPG (cf.~Definition~\ref{def:suitable}).
		We extend $\sigma_\Rr$ to a transition strategy $\sigma_\Ss$ for automaton $\Ss_{n,d}$: 
		in $\sigma_\Ss$ we resolve nondeterministic choices in the same way as in $\sigma_\Rr$; 
		the difference is only that in $\Ss_{n,d}$ we additionally update the counters (in a deterministic way).
		We have to prove that $\sigma_\Ss$ is winning for the set of all plays in $G$ that arise from $\tau$.
		To this end, consider such a play; let $\rho_\Ss$ be the run of $\Ss_{n,d}$ that follows $\sigma_\Ss$ and reads this play.
		Let $\rho_\Rr$ be the corresponding run of $\Rr_{n,d}$, obtained by projecting every state of $\rho_\Ss$ to its first component.
		By Theorem~\ref{thm:ok}, $\bad(\rho_\Rr)\leq n-1$.
		Notice that when $\rho_\Rr$ emits an odd priority $2k+1$, we decrease the counter $k$ by $1$,
		and when it emits any higher priority, we reset the counter $k$ to $n$.
		Because priority $2k+1$ is emitted at most $n-1$ times without emitting any higher priority in between (by the condition $\bad(\rho_\Rr)\leq n-1$),
		we obtain that the rejecting state is not reached.
		In consequence $\rho_\Ss$ is accepting, which finishes the proof.
	\end{proof}
	
	We see that $\Ss_{n,d}$ has $\xi_{n,d}=\eta_{n,d}\cdot n^{\rn(n)+1}+1$ states, and that from every state it has at most $\rn(n)+1$ transitions reading every letter.
	Thus, for a game graph $G$ with $n$ nodes, $m$ edges, and priorities up to $d$, the product game $G\times\Ss_{n,d}$ has $(n+m)\cdot\xi_{n,d}$ nodes 
	and no more than $m\cdot\xi_{n,d}\cdot(\rn(n)+2)$ edges.
	This safety game can be solved in linear time.
	Without loss of generality we can assume that $d\leq n$, so the running time is of the form $n^{O(\log n)}$.
	
	\begin{remark}\label{rem:1}
		Let us underline two aspects of the definition of SFPG separators that are important for the proofs of Theorems~\ref{thm:ok} and~\ref{thm:s-ok}.
		First, the transition strategies that we create actually depend on $G$ and $\tau$ (unlike in the definition of good-for-games automata).
		Second, in our transition strategies we choose a next transition basing not only on the priority of an edge to be read, but also basing on its target.
		For this reason, we use automata that read edges (i.e., triples: source, priority, target), not just priorities, 
		nor pairs: source node of an edge, priority of the edge (as in Bojańczyk and Czerwiński~\cite[Section 3]{separation-toolbox}).
		
		We believe that it is possible to construct a transition strategy that does not depend on $G$ and $\tau$, 
		and that chooses a next transition basing only on priorities (i.e., without knowing which nodes are visited).
		The proof of existence of such a transition strategy would be more involved than the proof presented above, however.
		
		Nevertheless, $\Rr_{n,d}$ and $\Ss_{n,d}$ are not good for games, due to some words (not being in $\poseven_{n,d}$) that can be accepted by these automata, 
		but not in a deterministic way. 
		Interestingly, $\Rr_{n,d}$ accepts exactly $\limsupeven_{n,d}$, the set of winning plays (while $\Ll(\Ss_{n,d})$ is smaller).
	\end{remark}
	
\bibliography{bib}

\appendix
\section{Appendix}

\newcommand{\plays}{\mathrm{Plays}}

In Remark~\ref{rem:1} we claim that $\Rr_{n,d}$ and $\Ss_{n,d}$ are not good for games, that $\Ll(\Rr_{n,d})=\limsupeven_{n,d}$, and that $\Ll(\Ss_{n,d})\subsetneq\limsupeven_{n,d}$.
We prove these facts below.

\begin{lemma}\label{lem:A1}
	It holds that $\Ll(\Rr_{n,d})=\limsupeven_{n,d}$.
\end{lemma}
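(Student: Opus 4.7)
The inclusion $\Ll(\Rr_{n,d}) \subseteq \limsupeven_{n,d}$ is already contained in the proof of Theorem~\ref{thm:ok}, which established that $\Rr_{n,d}$ rejects every word of $\limsupodd_{n,d}$. So the only task will be the reverse inclusion: given an arbitrary $w \in \limsupeven_{n,d}$, to exhibit an accepting run of $\Rr_{n,d}$ on~$w$.

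My plan is to fix $p=2m$, the largest priority that occurs infinitely often in~$w$, together with a position $N_0$ after which no priority of $w$ exceeds~$p$, and to split the run into three phases. In a \emph{prefix phase} (positions $<N_0$) I take only non-reset transitions, which emit priority~$1$ and drive the state to $\seq{P',\ldots,P'}$, where $P'\leq d$ is the maximum priority read in the prefix. In a \emph{cleanup phase} I wait for the first $\rn(n)$ occurrences of priority~$p$ beyond $N_0$ and, at the $j$-th such occurrence, perform a $j$-reset, using non-resets in between. In the \emph{main loop} thereafter I perform an $\rn(n)$-reset whenever the input priority equals~$p$, and a non-reset otherwise.

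The principal step, and where I expect the real work to lie, is the cleanup invariant, verified by induction on~$j$: immediately after the $j$-th cleanup reset, registers $j+1,\ldots,\rn(n)$ still hold~$P'$, registers $2,\ldots,j$ all hold~$p$, and register~$1$ holds~$1$ (with the degenerate sub-case $P'<p$ collapsing to the state $\seq{p,\ldots,p,1}$ already after the first step). The inductive step amounts to unwinding the register-update rule: reading~$p$ at step~$j+1$ first fills registers $1,\ldots,j$ with the value~$p$, and then the $(j+1)$-reset promotes the~$p$ placed into old register~$j$ up to the new register~$j+1$. The priority emitted at cleanup step~$j$ depends only on the parity of~$P'$ and is at most $2\rn(n)+1$; in total, cleanup contributes only $\rn(n)$ emissions.

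Once the invariant is established, the conclusion is immediate. After the $\rn(n)$-th cleanup step the state is $\seq{p,\ldots,p,1}$, and in the main loop each read of~$p$ first refills the whole register stack with~$p$ via the update and then triggers an $\rn(n)$-reset emitting the even priority~$2\rn(n)$ (since $r_{\rn(n)}=p$ is even at reset time), restoring the state to $\seq{p,\ldots,p,1}$; the non-resets in between modify only register~$1$ and preserve $r_{\rn(n)}=p$. Since $p$ occurs infinitely often in the suffix and no priority larger than $2\rn(n)+1$ is ever emitted (and $2\rn(n)+1$ only during the finite cleanup), the $\limsup$ of priorities along the run equals~$2\rn(n)$, which is even; the run is accepting, so $w\in\Ll(\Rr_{n,d})$.
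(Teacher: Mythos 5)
Your proof is correct, and its skeleton matches the paper's: the inclusion $\Ll(\Rr_{n,d})\subseteq\limsupeven_{n,d}$ is inherited from the strong-separator property, and for the converse you fix the largest priority $p$ occurring infinitely often in $w$ and construct a run that performs a reset on each sufficiently late occurrence of $p$, arguing that these resets are eventually all even. The difference is which register you choose to reset. The paper resets register $1$ on every edge of priority $p$ and takes non-reset transitions otherwise: once the suffix containing no priority above $p$ begins and register $1$ has been reset once, every subsequent occurrence of $p$ first writes $p$ into register $1$ via the update and then resets it, so the reset is even and emits priority $2$; no higher register is ever reset, odd resets of register $1$ happen only finitely often, hence the run is accepting. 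This choice makes your entire cleanup phase and its invariant unnecessary: a $1$-reset never has to flush stale prefix values out of higher registers, because the value it resets is exactly the priority just read. Your choice of the top register $\rn(n)$ is what forces the cascade into the shape $\seq{p,\dots,p,1}$, and that cascade is where essentially all the work in your argument lies. I checked the invariant (including the $P'<p$ degeneracy and the fact that intermediate non-resets touch only register $1$) and the main loop, and they are sound; the run you build is accepting with limsup of emitted priorities $2\rn(n)$ instead of $2$. So the proposal establishes the lemma, just at strictly higher cost than the paper's one-register argument.
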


\begin{proof}
	Because $\Rr_{n,d}$ is a strong PG separator, we already know that it rejects all words from $\limsupodd_{n,d}$.
	It remains to prove that it accepts all words from $\limsupeven_{n,d}$.
	Consider thus a word $w\in\limsupeven_{n,d}$.
	Let $p$ be the greatest priority that appears in $w$ infinitely often; $p$ is even.
	We consider a run $\rho$ of $\Rr_{n,d}$ that performs a reset of the register $1$ while reading an edge with priority $p$, and a non-reset transition while reading any other edge.
	Let us see that $\rho$ is accepting.
	From some moment on, there are no more priorities higher than $p$ in $w$.
	Then, after reading the priority $p$, the register $1$ is reset to $1$, and later it is never updated to any value higher than $p$.
	Thus, whenever the priority $p$ is read, we set the register $1$ to $p$, and then we reset it---it is an even reset.
	We thus have infinitely many even resets of the register $1$, only finitely many odd resets of this register, and no resets of any higher register; the run is accepting.
\end{proof}

\begin{lemma}
	For $d\geq2\cdot\rn(n)+2$, the automaton $\Rr_{n,d}$ is not good for games.
\end{lemma}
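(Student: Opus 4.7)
Suppose, towards a contradiction, that $\Rr_{n,d}$ is good for games, with transition strategy $\sigma$ winning for $\Ll(\Rr_{n,d})$; by Lemma~\ref{lem:A1}, this means $\sigma$ accepts every word in $\limsupeven_{n,d}$. The plan is to derive a contradiction by exhibiting a word in $\limsupeven_{n,d}$ on which $\sigma$'s run is rejecting.

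The starting observation is that every transition of $\Rr_{n,d}$ emits a priority in $\{1,2,\ldots,2\rn(n)+1\}$: non-reset transitions emit $1$, and a reset of register~$k$ emits $2k$ or $2k+1$ with $k\leq\rn(n)$. Hence an accepting run of $\Rr_{n,d}$ must emit priority $2\rn(n)+1$ only finitely often, while emitting some priority in $\{2,4,\ldots,2\rn(n)\}$ infinitely often. The hypothesis $d\geq 2\rn(n)+2$ then permits input priorities strictly larger than any priority emittable by $\Rr_{n,d}$, which is what drives the adversary's attack.

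Exploiting this gap, I would construct $w$ adversarially in rounds. The adversary plays priorities chosen from $\{1,\ldots,d\}$ adaptively to $\sigma$'s previous transitions, designed so that in each round $\sigma$ is forced to perform some reset of a register that currently holds an odd value, thereby emitting an odd priority. Interleaving these rounds with the input priority $2\rn(n)+2$ ensures that the limsup of $w$ is $2\rn(n)+2$, placing $w$ in $\limsupeven_{n,d}$; yet $\sigma$'s run emits an odd priority infinitely often, so the run is rejecting and we have the desired contradiction.

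The main obstacle is the pigeonhole-style adaptive argument establishing that $\sigma$ cannot avoid these bad resets. With only $\rn(n)$ registers but at least $2\rn(n)+2$ input priorities available, the adversary can force $\sigma$ to either abandon resets (emitting only priority $1$ cofinally, hence rejecting) or to reset registers holding odd values. Formalizing this step requires an inductive analysis over register levels and a careful choice of the adversary's priorities per round, tracking how the update rule propagates odd values through the lower registers in order to trap $\sigma$'s available resets.
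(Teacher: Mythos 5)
Your plan has a genuine gap, and in fact two of its concrete steps fail. First, the inference ``$\sigma$'s run emits an odd priority infinitely often, so the run is rejecting'' is wrong: a run of $\Rr_{n,d}$ is rejecting only if the \emph{largest} priority emitted infinitely often is odd, so infinitely many odd resets of some register are harmless as long as some higher register is even-reset infinitely often. Second, interleaving the input priority $2\rn(n)+2$ infinitely often actively helps the automaton rather than hurting it: since your adversary never plays anything larger, after each such letter the top register holds the even value $2\rn(n)+2$, so the strategy ``reset register $\rn(n)$ exactly when reading $2\rn(n)+2$, otherwise take the non-reset transition'' emits only priorities $1$ and $2\rn(n)$, with $2\rn(n)$ occurring infinitely often; that run is accepting, so your adversary does not defeat this $\sigma$ and no contradiction arises. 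The ``pigeonhole-style adaptive argument'' you defer is therefore not a technical detail but the entire missing proof, and it would have to establish something much stronger than ``some odd reset occurs in each round,'' namely that odd resets dominate all even resets of higher registers.

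The paper avoids this difficulty entirely with a diagonalization that you should compare against: the adversary does not try to force the run to reject a word of $\limsupeven_{n,d}$, but instead builds the word \emph{reactively} so that its parity is always the opposite of the run's. Concretely, after a non-reset transition it appends priority $2$, after an odd reset of register $k$ it appends $2k+2$, and after an even reset of register $k$ it appends $2k+1$ (this is where $d\geq 2\rn(n)+2$ is used). Then the limsup of the input word is the limsup of the emitted priorities plus one, so the parities flip: if the run is accepting the word lies in $\limsupodd_{n,d}$, contradicting the fact that $\Rr_{n,d}$ is a strong PG separator; if the run is rejecting the word lies in $\limsupeven_{n,d}=\Ll(\Rr_{n,d})$, so $\sigma$ is not winning for $\Ll(\Rr_{n,d})$. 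No forcing argument is needed. You correctly identified that the headroom $d\geq 2\rn(n)+2$ over the emitted priorities is the crux, but the mechanism that exploits it is mirroring, not coercion.
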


\begin{proof}
	Consider some transition strategy $\sigma$ for $\Rr_{n,d}$.
	We have to prove that $\sigma$ is not winning for the whole $\Ll(\Rr_{n,d})$ 
	(having this for every transition strategy $\sigma$ implies that $\Rr_{n,d}$ is not good for games).
	
	To this end, we construct a play $\xi$ by induction.
	The first edge of $\xi$ is $(1,2,1)$.
	Supposing that some prefix of $\xi$ is already constructed, we run $\Rr_{n,d}$ on this prefix, following the transition strategy $\sigma$.
	If the last transition of such a partial run is
	\begin{itemize}
	\item	a non-reset transition---we append $(1,2,1)$ to $\xi$;
	\item	an odd reset of a register $k$---we append $(1,2k+2,1)$ to $\xi$;
	\item	an even reset of a register $k$---we append $(1,2k+1,1)$ to $\xi$.
	\end{itemize}
	By the assumption $d\geq2\cdot \rn(n)+2$, all edges appearing in the above definition belong to $\Sigma_{n,d}$.

	Suppose that $\rho$ is accepting.
	Then, for some number $k$, in $\rho$ there are infinitely many even resets of the register $k$, and only finitely many odd resets of the register $k$ and resets of higher registers.
	Thus, the highest priority that is appended to $\xi$ infinitely often is $2k+1$ (by the definition of $\xi$).
	This means that $\xi\in\limsupodd_{n,d}$, so $\xi$ is rejected by $\Rr_{n,d}$, which is a contradiction.
	
	We thus have that $\rho$ is rejecting.
	By the definition of $\Rr_{n,d}$ this means that in $\rho$ either
	\begin{itemize}
	\item	from some moment on, there are only non-reset transitions, or
	\item	for some register $k$, in $\rho$ there are infinitely many odd resets of the register $k$, and only finitely many resets of higher registers.
	\end{itemize}
	Then, the highest priority appended to $\xi$ infinitely often is even ($2$ in the former case, and $2k+2$ in the latter case), so $\xi\in\limsupeven_{n,d}$,
	that is, $\xi\in\Ll(\Rr_{n,d})$ (by Lemma~\ref{lem:A1}).
	This means that $\sigma$ is not winning for $\Ll(\Rr_{n,d})$, as we wanted to prove.
\end{proof}

\begin{lemma}\label{lem:A3}
	For $n\geq 2$ and $d\geq 6$, the automaton $\Ss_{n,d}$ is not good for games.
\end{lemma}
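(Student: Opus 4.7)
I plan to adapt the construction of the previous lemma. Given a transition strategy $\sigma$ for $\Ss_{n,d}$, I will inductively build an infinite word $\xi$ so that, simultaneously, $\sigma$'s $\Ss_{n,d}$-run on $\xi$ reaches the rejecting state $\rej$ and $\xi$ nevertheless lies in $\Ll(\Ss_{n,d})$. Together these contradict $\sigma$ being a transition strategy winning for $\Ll(\Ss_{n,d})$.

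The construction of $\xi$ mirrors the one used for $\Rr_{n,d}$: at each step, based on the type of $\sigma$'s last non-$\rej$ transition (interpreted through the projection to an $\Rr_{n,d}$-transition), I append a letter whose priority is one greater than the emitted $\Rr_{n,d}$-priority, with a fallback appending $(1,6,1)$ in the rare case that $\sigma$ fires a reset of a register of index $\geq 3$. The condition $d\geq 6$ ensures that all appended letters belong to $\Sigma_{n,d}$, and once $\sigma$ is in $\rej$ I simply continue with $(1,2,1)$ forever.

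To show that $\sigma$'s $\Ss_{n,d}$-run on $\xi$ reaches $\rej$, I project this run (discarding any $\rej$-transitions) to an $\Rr_{n,d}$-run $\rho_\Rr$. If $\rho_\Rr$ were an infinite $\Rr_{n,d}$-accepting run, the case analysis from the previous lemma's proof --- adjusted for the fallback appends --- would place $\xi$ into $\limsupodd_{n,d}$, contradicting Lemma~\ref{lem:A1}. Hence $\rho_\Rr$ is either finite or $\Rr_{n,d}$-rejecting; in either case, the argument of the first paragraph of the proof of Theorem~\ref{thm:s-ok} shows that some $\Ss_{n,d}$-counter drains to $0$, forcing the $\Ss_{n,d}$-run into $\rej$.

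The more delicate task is showing $\xi\in\Ll(\Ss_{n,d})$. Equivalently, I must exhibit an $\Rr_{n,d}$-run $\rho^*$ on $\xi$ with $\bad(\rho^*)\leq n-1$, which by the construction of the counters of $\Ss_{n,d}$ translates to an accepting $\Ss_{n,d}$-run on $\xi$. I propose to define $\rho^*$ greedily: at each step, prefer an even reset of register~$2$ (emitting priority $4$) when the updated state has $r_2$ even, and otherwise fall through to an even reset of register~$1$, an odd reset of register~$1$, or finally a non-reset. The adaptive construction of $\xi$ ensures a parity-refreshing property: whenever $\sigma$ emits an odd priority, the very next $\xi$-letter carries a strictly larger even priority, which lifts some register of $\rho^*$'s state to an even value and lets $\rho^*$ resume even resets within one step. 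Consequently the odd emissions of $\rho^*$ (all equal to $3$ in this scheme) cannot accumulate more than $n-1$ in a row without an intervening emission of $4$, giving $\bad(\rho^*)\leq n-1$. The main obstacle is carrying out this parity bookkeeping rigorously across all possible behaviours of $\sigma$, and handling the fallback case for resets of registers above~$2$.
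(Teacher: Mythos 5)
Your plan diverges from the paper's proof, which does not diagonalize against an adaptively built word at all: it fixes four maximal ``pumping lengths'' $k_0,\dots,k_3$, exhibits three concrete words $w_1,w_2,w_3\in\Ll(\Ss_{n,d})$ sharing the prefix $(1,4,1)(1,1,1)^{k_0}$, and shows that any single transition strategy accepting all three would let one splice together a partial run contradicting the maximality of some $k_i$. That route avoids precisely the two points where your argument has genuine gaps.

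First, the fallback of appending $(1,6,1)$ when $\sigma$ resets a register of index $\geq 3$ breaks the parity-flip invariant on which the ``$\sigma$ is forced into $\rej$'' half rests. If $\sigma$ performs infinitely many \emph{even} resets of some register $k\geq 3$, the appended priorities are the even number $6$ rather than the odd number $2k+1$, so $\xi$ need not land in $\limsupodd_{n,d}$, and $\sigma$ may legitimately avoid $\rej$ forever (an even reset of register $k$ resets all counters below $k$ and decrements none). Concretely, from a state whose registers $1,\dots,k$ all hold $6$, an even reset of register $k$ followed by reading the appended $6$ returns to that same state, and this loop emits only priority $2$ in $\Ss_{n,d}$. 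This loophole is exactly why the analogous lemma for $\Rr_{n,d}$ needs $d\geq 2\cdot\rn(n)+2$, whereas here you only have $d\geq 6$; you cannot simply wave it away as a ``rare case''. Second, and more centrally, the claim $\bad(\rho^*)\leq n-1$ for your greedy run is asserted rather than proved, and the sketched justification conflates the two runs: the ``parity-refreshing'' property relates $\sigma$'s emissions to the letters of $\xi$, but it is $\rho^*$'s registers that must be lifted to even values, and a larger even priority in $\xi$ need not change any register of $\rho^*$ that already holds a larger odd value. Moreover your fall-through includes non-reset transitions, which emit priority $1$, so the odd emissions of $\rho^*$ are not ``all equal to $3$'', and counter $0$ can drain as well. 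Since $\Ll(\Ss_{n,d})\subsetneq\limsupeven_{n,d}$, membership of $\xi$ in the language is the entire difficulty here (unlike for $\Rr_{n,d}$, where Lemma~\ref{lem:A1} makes it free), and your proof leaves it open.
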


\begin{proof}
	Denote $r=\rn(n)$.
	The assumption $n\geq2$ can be rewritten as $r\geq 2$; in the proof below we indeed use the fact that there are at least two registers.
	Denote also $s_I=(\seq{1,1,\dots,1},\seq{n,n,\dots,n})$ (this is the initial state of $\Ss_{n,d}$) and $s_2=(\seq{4,1,\dots,1},\seq{n,1,\dots,1})$.
	Recall that $\rej$ denotes the rejecting state of $\Ss_{n,d}$.
	For $i=0,1,2,3$, respectively, let $k_i$ be the largest number such that there exists a partial run of $\Ss_{n,d}$ that 
	\begin{itemize}
	\item	starts in $s_I$, ends in $s_2$, and reads the word $(1,4,1)(1,1,1)^{k_0}$;
	\item	starts in $s_2$, avoids $\rej$, and reads the word $(1,1,1)^{k_1}$;
	\item	starts in $s_2$, avoids $\rej$, and reads the word $(1,2,1)(1,3,1)^{k_2}$;
	\item	starts in $s_2$, avoids $\rej$, and reads the word $(1,2,1)(1,5,1)^{k_3}$.
	\end{itemize}
	Notice that there exists a partial run from $s_I$ to $s_2$ that reads $(1,4,1)(1,1,1)^k$ for some $k$ (thus $k_0$ is well defined):
	for example, the state $s_2$ is reached if we first reset $r-1$ times the register $r-1$ having value $4$ (even resets, changing the state to $(\seq{4,1,\dots,1},\seq{n,n,\dots,n})$),
	then, for $j=r-1,r-2,\dots,1$, we reset $n-1$ times the register $j$ having value $1$ (odd resets), and finally we can perform $n-1$ non-reset transitions.
	Likewise, if while reading $(1,2,1)$ from $s_2$ we reset some of the registers (this is an even reset: the registers have value $2$ or $4$), 
	we do not reach $\rej$, so $k_2$ and $k_3$ are well defined.
	Notice also that all $k_i$ are finite.
	Indeed, suppose for example that there is a partial run from $s_I$ to $s_2$ (i.e., not visiting $\rej$) that reads $(1,4,1)(1,1,1)^k$ for $k$ larger than the number of states of $\Ss_{n,d}$.
	Then, it contains a cycle.
	We can repeat this cycle forever, obtaining an infinite run reading $(1,4,1)(1,1,1)^\omega$.
	This run is accepting (it does not visit $\mathsf{rej}$),
	but the word is in $\limsupodd_{n,d}$, which is impossible: $\Ss_{n,d}$ rejects all words from $\limsupodd_{n,d}$.
	This shows that $k_0$ is finite, and similarly we can prove that $k_1,k_2,k_3$ are finite.
	
	Consider now three infinite plays:
	\begin{align*}
		w_1&=(1,4,1)(1,1,1)^{k_0}(1,1,1)^{k_1}(1,6,1)^\omega\,,\\
		w_2&=(1,4,1)(1,1,1)^{k_0}(1,2,1)(1,3,1)^{k_2}(1,6,1)^\omega\,,\\
		w_3&=(1,4,1)(1,1,1)^{k_0}(1,2,1)(1,5,1)^{k_3}(1,6,1)^\omega\,.
	\end{align*}
	
	Observe that the words can be accepted by $\Ss_{n,d}$.
	Indeed, by the definition of the numbers $k_i$, there is a partial run from $s_I$ that avoids $\rej$ and reads the finite prefix of $w_j$ (for $j=1,2,3$) before $(1,6,1)^\omega$.
	Then, while reading $(1,6,1)^\omega$, we can perform an even reset of the register $r$ (having value $6$); even resets do not lead to $\rej$, so the run is accepting.

	Next, we prove that there is no single transition strategy that allows to accept simultaneously $w_1$, $w_2$, and $w_3$.
	To this end, suppose that $\rho_1$, $\rho_2$, $\rho_3$ are accepting runs reading respectively $w_1$, $w_2$, $w_3$, and following the same transition strategy.

	Let $s_2'$ be the state reached after reading the prefix $(1,4,1)(1,1,1)^{k_0}$ (this state is the same in the three runs).
	As the runs are accepting, we have $s_2'\neq\rej$.
	Because the priorities read so far are only $1$ and $4$, the registers of $s_2'$ can contain only values $1$ or $4$: 
	say that there is $1$ in the lower $l$ registers, and $4$ in the remaining $r-l$ registers.
	
	Suppose first that $l=r$, that is, that $s_2'=(\seq{1,1,\dots,1},\seq{c_r,c_{r-1},\dots,c_0})$.
	In this situation there is a partial run from $s_2=(\seq{4,1,\dots,1},\seq{n,1,\dots,1})$ to $s_2'$ reading $(1,1,1)^{k_1'}$ for some $k_1'>0$.
	Indeed, we first reset the register $r$ having value $4$, which leads to $(\seq{1,1,\dots,1},\seq{n,n,\dots,n})$;
	then, for $j=r,r-1,\dots,1$, we reset $n-c_j$ times the register $j$ having value $1$ (odd resets);
	finally, we perform $n-c_0$ non-reset transitions.
	It follows that there is a partial run from $s_2$ that avoids $\rej$ and reads $(1,1,1)^{k_1'+k_1}$: we first reach $s_2'$, and then we follow $\rho_1$.
	This contradicts the maximality of $k_1$; it was impossible that $l=r$.
	
	We thus have $l\leq r-1$.
	This is only possible if the register $r$ contained the value $4$ in all states of the considered prefix of the run (except the initial state).
	Thus there were no odd resets of the register $r$ so far; the counter number $r$ in $s_2'$ has value $n$.
	We now observe that from $s_2'$ we can reach $s_2$ reading $(1,1,1)^{k_0'}$ for some $k_0'$.
	Indeed, if $l=r-1$, we have $s_2'=(\seq{4,1,\dots,1},\seq{n,c_{r-1},\dots,c_0})$; by performing an appropriate number of odd resets of registers $r-1,\dots,1$, and of non-reset transitions,
	we can change the counters to $\seq{n,1,\dots,1}$.
	If $l<r-1$, we first reset $r-1-l$ times the register $r-1$ having value $4$ (even resets), which changes the state to $(\seq{4,1,\dots,1},\seq{n,n,\dots,n})$), and then we continue as above.
	Notice now that a composition of the considered prefix of the run and of this partial run is a partial run from from $s_I$ to $s_2$ that reads $(1,4,1)(1,1,1)^{k_0+k_0'}$.
	By maximality of $k_0$ we have that $k_0'=0$, which is only possible if $s_2'=s_2$.
	
	We already know that after reading $(1,4,1)(1,1,1)^{k_0}$ the runs $\rho_i$ reach the state $s_2$.
	Consider now the transition of $\rho_2$ and $\rho_3$ reading $(1,2,1)$; in both runs this is the same transition.
	A non-reset transition from $s_2$ would lead to $\rej$ (the last counter in $s_2$ contains $1$), so this is a reset transition.
	One case is that this transition resets the register $r$, and leads to $s_3=(\seq{2,2,\dots,2,1},\seq{n,n,n,\dots,n})$.
	We can instead reset the register $r-1$ while reading $(1,2,1)$, which leads to $s_3'=(\seq{4,2,\dots,2,1},\seq{n,1,n,\dots,n})$,
	and then reset the register $r$ while reading $(1,3,1)$, which leads to $s_3''=(\seq{3,3,\dots,3,1},\seq{n,n,n,\dots,n})$.
	We know that there is a partial run from $s_3$ that reads $(1,3,1)^{k_2}$ and avoids $\rej$ (a fragment of $\rho_2$).
	But the same partial run can be executed from $s_3''$ (an update by the priority $3$ results in the same, no matter whether we start from $s_3$ or from $s_3''$).
	Thus, by going through $s_3'$ and $s_3''$ we obtain a partial run from $s_2$ that reads $(1,2,1)(1,3,1)^{1+k_2}$ and avoids $\rej$.
	This contradicts the maximality of $k_2$.
	
	The remaining case is that the transition of $\rho_2$ and $\rho_3$ reading $(1,2,1)$ resets some register other than $r$.
	The transition leads to a state $s_4=(\seq{4,2,\dots,2,1},\seq{n,1,c_{r-2},\dots,c_1,n})$ 
	(where the counter number $r-1$ remains $1$, and the counters $c_{r-2},\dots,c_1$ are either $1$ or $n$).
	We can instead reset the register $r$ while reading $(1,2,1)$, which leads to $s_3$ as above,
	and then, while reading $(1,5,1)^{k_3'}$, we can reset $n-1$ times the register $r-1$ (odd resets, leading to $(\seq{5,5,\dots,5,1},\seq{n,1,n,\dots,n})$),
	and reset $n-1$ times appropriate registers among $r-2,\dots,1$, so that the state becomes $s_4'=(\seq{5,5,\dots,5,1},\seq{n,1,c_{r-2},\dots,c_1,n})$.
	Notice that $k_3'\geq n-1>0$.
	We know that there is a partial run from $s_4$ that reads $(1,5,1)^{k_3}$ and avoids $\rej$ (a fragment of $\rho_3$).
	But the same partial run can be executed from $s_4'$ (an update by the priority $5$ results in the same, no meter whether we start from $s_4$ or from $s_4'$).
	Thus, by going through $s_3$ and $s_4'$ we obtain a partial run from $s_2$ that reads $(1,2,1)(1,5,1)^{k_3'+k_3}$ and avoids $\rej$.
	This contradicts the maximality of $k_3$.

	We thus obtain that $w_1,w_2,w_3\in\Ll(\Ss_{n,d})$ cannot be accepted while following a common transition strategy.
	In other words, no transition strategy is winning for the whole $\Ll(\Ss_{n,d})$; the automaton is not good for games.
\end{proof}

\begin{lemma}
	For $d\geq 2$ we have that $\Ll(\Ss_{n,d})\subsetneq\limsupeven_{n,d}$.
\end{lemma}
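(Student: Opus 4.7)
The inclusion $\Ll(\Ss_{n,d})\subseteq\limsupeven_{n,d}$ is immediate from Theorem~\ref{thm:s-ok}: since $\Ss_{n,d}$ is a strong PG separator, it rejects every word from $\limsupodd_{n,d}$, and by positional determinacy (or simply by the partition $\Sigma_{n,d}^\omega=\limsupeven_{n,d}\cup\limsupodd_{n,d}$) this means every accepted word lies in $\limsupeven_{n,d}$. So the whole task reduces to exhibiting a single word $w\in\limsupeven_{n,d}\setminus\Ll(\Ss_{n,d})$. The guiding intuition is that $\Ll(\Ss_{n,d})$ is a safety language and therefore topologically closed in $\Sigma_{n,d}^\omega$, whereas $\limsupeven_{n,d}$ is not closed; a witness of non-closure is the sequence $(1,2,1)^m(1,1,1)^\omega$ (each of which lies in $\limsupodd_{n,d}$) converging to $(1,2,1)^\omega\in\limsupeven_{n,d}$.

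To make this concrete, I would use $(1,1,1)^\omega\in\limsupodd_{n,d}$ as a seed. By the strong separator property this word is rejected by $\Ss_{n,d}$, so every infinite run on it visits the absorbing state $\rej$. Because $\Ss_{n,d}$ has finitely many states and finitely many transitions, the tree of partial runs reading prefixes $(1,1,1)^m$ that do not visit $\rej$ is finitely branching, and by the above observation it has no infinite branch; König's lemma then yields a finite $M$ such that every partial run of $\Ss_{n,d}$ reading $(1,1,1)^M$ must visit $\rej$. Now set $w=(1,1,1)^M(1,2,1)^\omega$, which belongs to $\Sigma_{n,d}^\omega$ thanks to the hypothesis $d\geq 2$. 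The largest priority appearing infinitely often in $w$ is $2$, so $w\in\limsupeven_{n,d}$. On the other hand, by the choice of $M$ every run of $\Ss_{n,d}$ on $w$ enters $\rej$ while reading the prefix $(1,1,1)^M$, and once in $\rej$ it remains there by the last clause of the definition; hence no run on $w$ is accepting, so $w\notin\Ll(\Ss_{n,d})$.

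I do not anticipate a real obstacle here; the argument is a clean mix of the strong separator property and the safety/closure structure of $\Ss_{n,d}$. If one prefers an explicit bound to the appeal to König's lemma, it can be supplied: on any partial run reading only $(1,1,1)$'s, all registers stay equal to $1$ (the priority $1$ triggers no meaningful update and resets only shuffle $1$'s), so the state is determined by the $(\rn(n)+1)$-tuple of counters in $\{1,\dots,n\}$, and each transition either decrements a counter or decrements one and resets all lower ones to $n$; a straightforward induction shows $M=n^{\rn(n)+1}$ suffices. The only mild subtlety worth flagging in the write-up is the standard point that for a nondeterministic safety automaton, "every run eventually visits $\rej$" upgrades to "some finite prefix kills every run", which is precisely the content of the König's lemma step.
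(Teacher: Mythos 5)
Your proposal is correct and follows essentially the same route as the paper: both exhibit a witness of the form $(1,1,1)^{N}(1,2,1)^\omega$ where $N$ exceeds the maximal length of a partial run on $(1,1,1)$'s avoiding $\rej$, whose existence and rejection follow from $(1,1,1)^\omega\in\limsupodd_{n,d}$ being rejected by the strong separator. The only (inessential) difference is that you derive finiteness of the bound via K\"onig's lemma (or the explicit counter argument), while the paper pumps a cycle in a too-long $\rej$-avoiding partial run to contradict rejection of $(1,1,1)^\omega$.
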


\begin{proof}
	We already know that $\Ss_{n,d}$ rejects all words from $\limsupodd_{n,d}$; it is thus enough to prove that $\limsupeven_{n,d}\setminus\Ll(\Ss_{n,d})\neq\emptyset$.
	Let $k$ be the largest number such that there exists a partial run of $\Ss_{n,d}$ that starts in the initial state, avoid $\rej$, and reads $(1,1,1)^k$.
	As in the proof of Lemma~\ref{lem:A3}, the number $k$ is finite (not larger than the number of states of $\Ss_{n,d}$).
	Consider the word $(1,1,1)^{k+1}(1,2,1)^\omega\in\limsupeven_{n,d}$.
	Because at the beginning we have $(1,1,1)$ repeated $k+1$ times, every run reading this word is rejecting 
	(otherwise existence of the partial run reading $(1,1,1)^{k+1}$ and avoiding $\rej$ would contradict the maximality of $k$).
	Thus, this word belongs to $\limsupeven_{n,d}\setminus\Ll(\Ss_{n,d})$.
\end{proof}

\end{document}